 \newcommand\numberthis{\addtocounter{equation}{1}\tag{\theequation}}
 \newtheorem{assume}{Assumption}
 \newtheorem{theorem}{Theorem} 
 \newtheorem{remark}{Remark}
 \title{The Synthetic Control Method with Nonlinear Outcomes: Estimating the Impact of the 2019 Anti-Extradition Law Amendments Bill Protests on Hong Kong's Economy}
 \author{Wei Tian}
\date{August 24, 2021}
\begin{document}

 \maketitle

 \begin{abstract}
 	The synthetic control estimator \citep{abadie2010synthetic} is asymptotically unbiased assuming that the outcome is a linear function of the underlying predictors and that the treated unit can be well approximated by the synthetic control before the treatment. When the outcome is nonlinear, the bias of the synthetic control estimator can be severe.
 	In this paper, we provide conditions for the synthetic control estimator to be asymptotically unbiased when the outcome is nonlinear, and propose a flexible and data-driven method to choose the synthetic control weights.
 	Monte Carlo simulations show that compared with the competing methods, the nonlinear synthetic control method has similar or better performance when the outcome is linear, and better performance when the outcome is nonlinear, and that the confidence intervals have good coverage probabilities across settings.
 	In the empirical application, we illustrate the method by estimating the impact of the 2019 anti-extradition law amendments bill protests on Hong Kong's economy, and find that the year-long protests reduced real GDP per capita in Hong Kong by 11.27\% in the first quarter of 2020, which was larger in magnitude than the economic decline during the 1997 Asian financial crisis or the 2008 global financial crisis.
 \end{abstract}

\section{Introduction}

The synthetic control method \citep{abadie2010synthetic} estimates the treatment effect on a treated unit by comparing its outcome with the outcome of the synthetic control, which is constructed using a convex combination of the control units such that the observed predictors and pretreatment outcomes of the synthetic control closely match those of the treated unit.
\cite{abadie2010synthetic} show that the bias of the synthetic control estimator is bounded by a function that goes to zero as the number of pretreatment periods increases, provided that the outcome is a linear function of the observed and unobserved predictors, and that the treated unit is well approximated by the synthetic control in the pretreatment periods.
When the outcome is nonlinear, however, the bias of the synthetic control estimator can be severe, as a good fit on the observed predictors and pretreatment outcomes between the treated unit and the synthetic control does not necessarily imply a good fit on the unobserved predictors.
This paper generalises the synthetic control method to cases where the outcome is a nonlinear function of the predictors.

We first relax the non-negativity restriction on the weights, which is imposed by \cite{abadie2010synthetic} to prevent extrapolation biases.
We show using an example that there is no extrapolation bias when the outcome is linear, and when the outcome is nonlinear, the interpolation bias or extrapolation bias tends to be smaller if we construct the synthetic control using control units that are closer to the treated unit, whereas the synthetic control method with the non-negativity restriction does not prefer or implement the use of closer neighbours.
The non-negativity restriction also makes it less likely to obtain a good pretreatment fit, especially when the treated unit takes extreme values in the matching variables or when the sample size is small, and thus limits the applicability of the synthetic control method. In cases where the synthetic control method can be used, the non-negativity restriction may distort the size of the permutation test since the post/pretreatment RMSPE ratio in the permutation test is conditional on a good pretreatment fit for the treated unit but unconditional for the others, which would lead to over-rejection of the null hypothesis, as noted in \cite{ferman2017placebo}.
Although the non-negativity restriction acts as a regularisation method and often ensures the sparsity of the weights, which is an appealing feature in comparative case studies due to the ease of interpretation, this may come at the cost of a larger bias of the estimator. Relaxing the non-negativity restriction allows more flexible regularisation methods to be implemented so that the bias (and potentially the variance) of the estimator is smaller, while the sparsity of the weights can also be achieved when appropriate.

We then move on to provide the conditions for the synthetic control estimator to be asymptotically unbiased when the outcome is nonlinear, to complement the theoretical result for the linear case in \cite{abadie2010synthetic}. 
Furthermore, we show that there is a tradeoff between the aggregate matching discrepancy, i.e., the matching discrepancy between the treated unit and the synthetic control, and the pairwise matching discrepancies, i.e., the matching discrepancies between the treated unit and the control units used for constructing the synthetic control, depending on the degree of nonlinearity. Specifically, when the degree of nonlinearity is low, the bias of the synthetic control estimator tends to be smaller if we construct the synthetic control using more control units, so that the matching discrepancy between the treated unit and the synthetic control is smaller and the weights are more spread out. When the outcome is highly nonlinear, the bias of the synthetic control estimator tends to be smaller if we use only the nearest neighbours, so that the pairwise matching discrepancies are smaller.

To address this trade-off, we propose choosing the weights with elastic net type regularisation, where the $L_1$ penalty terms are weighted by pairwise matching discrepancies between the treated unit and the control units to penalise using control units that are farther away from the treated unit, whereas the $L_2$ penalty term penalises concentrating the weights on a few control units, and the optimal tuning parameters for the penalty terms are selected using cross-validation.
This method can be considered as a combination of the methods in \cite{doudchenko2017} and \cite{abadie2020penalized}, where \cite{doudchenko2017} propose choosing the weights with the elastic net regularisation while relaxing the non-negativity restriction and other restrictions on the weights, and \cite{abadie2020penalized} propose choosing the weights with $L_1$ penalty terms weighted by pairwise matching discrepancies between the treated unit and the control units while maintaining the non-negativity restriction.
The motivation of both these studies is to use regularisation methods to ensure that there is a unique set of weights that minimise the matching discrepancy between the treated unit and the synthetic control when the number of control units is large with no regard to nonlinearity, whereas this paper aims to provide a flexible and data-driven method to obtain a synthetic control estimator that has a smaller bias when the outcome is potentially nonlinear.
Monte Carlo simulations comparing the nonlinear synthetic control method and these two methods as well as the original synthetic control method show that the nonlinear synthetic control method has similar performance with the method in \cite{doudchenko2017} and better performance than the other two in linear settings, and has the best performance in nonlinear settings.

In the main empirical application, we estimate the impact of the 2019 anti-extradition law amendments bill protests on Hong Kong's economy. The results suggest that the protests had a detrimental effect on Hong Kong's economy from the second quarter of 2019. The magnitude of the impact grew rapidly and reached its peak in the first quarter of 2020, when real GDP per capita in Hong Kong was 11.27\% lower than what it would be if there were no protests. This exceeds the peak-to-trough decline in quarterly real GDP per capita in Hong Kong during the 1997 Asian financial crisis and the 2008 global financial crisis.
The effect became insignificant in the second and third quarters of 2020, when almost all economies were severely hit by the COVID-19 pandemic, and was significant again in the fourth quarter, with the quarterly GDP per capita 8.8\% lower than its counterfactual level due to the slow recovery of the economy in Hong Kong.

The rest of the paper is organised as follows.
Section \ref{Ch1_sec_SCM} provides an overview of the original synthetic control method and a discussion on the non-negativity restriction. Section \ref{Ch1_sec_NSC} provides the conditions for the synthetic control estimator to be asymptotically unbiased when the outcome is nonlinear, and proposes the nonlinear synthetic control method for choosing the weights. Section \ref{Ch1_sec_sim} conducts Monte Carlo simulations to compare the nonlinear synthetic control method and the competing methods. Section \ref{Ch1_sec_app} revisits the two applications in \cite{abadie2010synthetic} and \cite{abadie2015comparative} to illustrate the nonlinear synthetic control method, and estimates the impact of the 2019 anti-extradition law amendments bill protests on Hong Kong's economy. Section \ref{Ch1_sec_con} concludes. Appendix \ref{Ch1_sec_data} lists the data sources for the main application. Appendix \ref{Ch1_sec_proof} collects the proofs.

\section{The Synthetic Control Method}\label{Ch1_sec_SCM}

\subsection{Overview}
This section provides an overview of the original synthetic control method in \cite{abadie2010synthetic}. For a more detailed review, see \cite{abadie2021jel}.
Suppose that we observe $N$ units over $T$ time periods. Without loss of generality, we assume that the first unit receives treatment at period $T_0+1\le T$ and remains treated afterwards, while all the other $J=N-1$ units are untreated throughout the window of observation.
If we denote the indicator for the binary treatment status for unit $i$ at time $t$ as $D_{it}$, then $D_{it}=1$ for $i=1$ and $t>T_0$, and $D_{it}=0$ otherwise.

The quantity of interest is the treatment effect on the treated unit at time $t>T_0$, which is given by the difference between its treated potential outcome and untreated potential outcome at time $t$ \citep{rubin1974estimating},
$$\tau_{1t}=Y_{1t}^1-Y_{1t}^0,$$
where $Y_{1t}^1$ is the outcome that we would observe for unit 1 at time $t$ if unit 1 is treated at the time, and $Y_{1t}^0$ is the outcome that we would observe otherwise.
The observed outcome can be written as $Y_{1t}=D_{1t}Y_{1t}^1+\left(1-D_{1t}\right)Y_{1t}^0$.
Since we only observe $Y_{1t}^1$ for $t>T_0$, estimating $\tau_{1t}$ requires predicting the untreated potential outcome $Y_{1t}^0$. We assume the following functional form for the untreated potential outcome.

\begin{assume}[]\label{Ch1_assume_IFE}
	The untreated potential outcome for unit $i$ at period $t$ is given by an interactive fixed effects model
	\begin{equation}\label{Ch1_eq_IFE}
		Y_{it}^0=\boldsymbol{X}_i'\boldsymbol{\beta}_t+\boldsymbol{\mu}_i'\boldsymbol{\lambda}_t+\varepsilon_{it},
	\end{equation}
	where $\boldsymbol{X}_i$ and $\boldsymbol{\mu}_i$ are the $k\times 1$ and $f\times 1$ vectors of observed and unobserved predictors of $Y_{it}^0$ with coefficients $\boldsymbol{\beta}_{t}$ and $\boldsymbol{\lambda}_{t}$, respectively, and $\varepsilon_{it}$ is the individual transitory shock.
\end{assume}

\begin{remark}
	\normalfont
	Alternatively, the interactive fixed effects term $\boldsymbol{\mu}_i'\boldsymbol{\lambda}_t$ can be interpreted as the product of common time factors $\boldsymbol{\lambda}_t$ and individual factor loadings $\boldsymbol{\mu}_i$. Both interpretations are discussed in more details in \cite{bai2009panel}.
\end{remark}

\begin{remark}
	\normalfont
	The factor model presented in \cite{abadie2010synthetic} and \cite{abadie2021jel} also includes a common time-varying intercept, representing the time trend in the outcome. This can be considered as a special case of \eqref{Ch1_eq_IFE}, where the time-varying intercept is one element in $\boldsymbol{\lambda}_t$ with the corresponding element in $\boldsymbol{\mu}_i$ being 1.
\end{remark}

\medskip

The individual transitory shocks are assumed to satisfy the following assumptions.

\begin{assume}[]\label{Ch1_assume_error}
	\leavevmode
	\begin{enumerate}[label=\arabic*)]
		\item $\varepsilon_{it}$ are independent across $i$ and $t$;
		\item $\mathbbm{E}\left(\varepsilon_{it}\mid \boldsymbol{X}_{j},\boldsymbol{\mu}_j,D_{js}\right)=0$ for all $i$, $j$, $t$ and $s$;
		\item $\mathbb{E}\vert\varepsilon_{it}\vert^p<\infty$ for all $i$, $t$ and some even integer $p\ge2$.
	\end{enumerate}
\end{assume}

\begin{remark}
	\normalfont
	The first part of Assumption \ref{Ch1_assume_error} assumes that the individual transitory shocks are independent across units and time.
	In a panel data setting, cross sectional and time serial correlations in the individual transitory shocks are to be expected. Here we are making a simplifying assumption that the cross sectional and time serial correlations are due to the unobserved individual and time fixed effects. Indeed, if we treat $u_{it}=\boldsymbol{\mu}_i'\boldsymbol{\lambda}_t+\varepsilon_{it}$ as the individual transitory shock, then $u_{it}$ are correlated across units and time, while $\varepsilon_{it}$ remain independent across units and time.
	The second part assumes that the individual transitory shocks have zero mean conditional on the observed and unobserved predictors and the treatment status.
	The third part ensures that the predictors are not dominated by the transitory shocks in determining the outcomes.
\end{remark}

To estimate the treatment effect for unit 1 at time $t>T_0$, a synthetic control is constructed as a linear combination of the control units using weights $w_j,\ j=2,\dots,N$ such that
\begin{align*}
	\sum_{j=2}^N w_j                                           & =1,\tag{adding-up} \label{Ch1_res_add}                                                                                                         \\
	w_j                            \ge 0 \enspace              & \text{for} \enspace j=2,\dots,N,\tag{non-negativity} \label{Ch1_res_non-negativity}                                                            \\
	\sum_{j=2}^Nw_j\boldsymbol{X}_j=\boldsymbol{X}_1  \enspace & \text{and} \enspace \sum_{j=2}^Nw_jY_{jt}=Y_{1t} \enspace \text{for all} \enspace t\le T_0. \tag{pretreatment-fit} \label{Ch1_res_perfect-fit}
\end{align*}

\begin{assume}[]\label{Ch1_assume_fit}
	There exists a set of weights $\left(w_2^*,\dots,w_N^*\right)$ that satisfy the \ref{Ch1_res_add}, \ref{Ch1_res_non-negativity} and \ref{Ch1_res_perfect-fit} restrictions.
\end{assume}

\begin{remark}
	\normalfont
	$\left(w_2^*,\dots,w_N^*\right)$ are random quantities that depend on the sample.
	Assumption \ref{Ch1_assume_fit} is satisfied if the observed predictors and the pretreatment outcomes of the treated unit fall inside the convex hull of those for the control units, in which case there is either a unique or infinitely many sets of weights that satisfy the restrictions, because if there exist two different sets of weights that satisfy the restrictions, then any convex combination of them also satisfy the restrictions.\footnote{The treated unit usually does not fall inside the convex hull of the control units in terms of the matching variables, unless the number of control units is much larger than the number of matching variables, as discussed in Section \ref{Ch1_sec_discuss}.}
	To ensure there is a single solution in this case, \cite{abadie2020penalized} propose a penalised synthetic control method, which adds penalty terms weighted by the pairwise matching discrepancies between the treated unit and the control units, to the problem of minimising the matching discrepancy between the treated unit and the synthetic control.
	Note that the solution of the penalised optimisation problem may not belong to the solution set of the original optimisation problem. For a method that pick the solution that minimises the pairwise matching discrepancies from the sets of weights that satisfy Assumption \ref{Ch1_assume_fit}, see the bilevel optimisation estimator in \cite{diaz2015matching}.
\end{remark}

\medskip

The synthetic control estimator for $\tau_{1t}$ is constructed as
\begin{align}\label{Ch1_eq_SCM}
	\hat{\tau}_{1t}=Y_{1t}-\sum_{j=2}^N w_j^*Y_{jt}.
\end{align}

Before proceeding to the main theoretical result of \cite{abadie2010synthetic}, we also need the following assumption, which ensures that matching on the observed predictors and pretreatment outcomes implies matching on the unobserved predictors.

\begin{assume}[]\label{Ch1_assume_rank}
	The smallest eigenvalue of $\frac{1}{T_0}\sum_{t=1}^{T_0}\boldsymbol{\lambda}_t\boldsymbol{\lambda}_t'$ is bounded from below by some positive number $\underline{\xi}$.
\end{assume}

\medskip

The following theorem gives the main theoretical result in \cite{abadie2010synthetic}, which shows that the bias of the synthetic control estimator goes to zero as the number of pretreatment periods goes to infinity, under the stated assumptions.

\begin{theorem}\label{Ch1_thm_SC}
	Under Assumptions \ref{Ch1_assume_IFE}, \ref{Ch1_assume_error}, \ref{Ch1_assume_fit} and \ref{Ch1_assume_rank},
	$\mathbb{E}\left(\hat{\tau}_{1t}-\tau_{1t}\right)\rightarrow 0 \enspace \text{as} \enspace T_0\rightarrow \infty.$
\end{theorem}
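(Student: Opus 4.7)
The plan is to decompose the estimation error via the interactive fixed effects model, show that the observable components cancel thanks to the pretreatment-fit restriction, and then transfer the pretreatment outcome match into a match on the unobserved factors $\boldsymbol{\mu}_i$ using the rank condition in Assumption \ref{Ch1_assume_rank}.

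First I would substitute \eqref{Ch1_eq_IFE} into \eqref{Ch1_eq_SCM}, and using $Y_{1t}^{1}-\tau_{1t}=Y_{1t}^{0}$, write
\[
\hat{\tau}_{1t}-\tau_{1t}=\Bigl(\boldsymbol{X}_{1}-\sum_{j=2}^{N}w_{j}^{*}\boldsymbol{X}_{j}\Bigr)'\boldsymbol{\beta}_{t}+\Bigl(\boldsymbol{\mu}_{1}-\sum_{j=2}^{N}w_{j}^{*}\boldsymbol{\mu}_{j}\Bigr)'\boldsymbol{\lambda}_{t}+\Bigl(\varepsilon_{1t}-\sum_{j=2}^{N}w_{j}^{*}\varepsilon_{jt}\Bigr).
\]
The first bracket vanishes by the \ref{Ch1_res_perfect-fit} restriction on $\boldsymbol{X}$. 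The third has mean zero by Assumption \ref{Ch1_assume_error}, because $w_{j}^{*}$ is measurable with respect to $(\boldsymbol{X},\boldsymbol{\mu},D,\{\varepsilon_{\cdot s}\}_{s\le T_{0}})$ while the post-treatment $\varepsilon_{1t},\varepsilon_{jt}$ have conditional mean zero. The bias therefore reduces to $\mathbb{E}[(\boldsymbol{\mu}_{1}-\sum_{j}w_{j}^{*}\boldsymbol{\mu}_{j})'\boldsymbol{\lambda}_{t}]$.

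Next I would extract information about $\boldsymbol{\mu}_{1}-\sum_{j}w_{j}^{*}\boldsymbol{\mu}_{j}$ from the matched pretreatment outcomes. For each $s\le T_{0}$, the restriction $Y_{1s}=\sum_{j}w_{j}^{*}Y_{js}$ combined with matching on $\boldsymbol{X}$ yields $(\boldsymbol{\mu}_{1}-\sum_{j}w_{j}^{*}\boldsymbol{\mu}_{j})'\boldsymbol{\lambda}_{s}=\sum_{j}w_{j}^{*}\varepsilon_{js}-\varepsilon_{1s}$. Right-multiplying by $\boldsymbol{\lambda}_{s}$, averaging over $s\le T_{0}$ and inverting $\tfrac{1}{T_{0}}\sum_{s=1}^{T_{0}}\boldsymbol{\lambda}_{s}\boldsymbol{\lambda}_{s}'$, which by Assumption \ref{Ch1_assume_rank} is invertible with operator norm at most $\underline{\xi}^{-1}$, produces the representation
\[
\boldsymbol{\mu}_{1}-\sum_{j=2}^{N}w_{j}^{*}\boldsymbol{\mu}_{j}=\Bigl(\tfrac{1}{T_{0}}\sum_{s=1}^{T_{0}}\boldsymbol{\lambda}_{s}\boldsymbol{\lambda}_{s}'\Bigr)^{-1}\tfrac{1}{T_{0}}\sum_{s=1}^{T_{0}}\boldsymbol{\lambda}_{s}\Bigl(\sum_{j=2}^{N}w_{j}^{*}\varepsilon_{js}-\varepsilon_{1s}\Bigr).
\]
Cauchy-Schwarz then bounds $|\mathbb{E}[\hat{\tau}_{1t}-\tau_{1t}]|$ by $\underline{\xi}^{-1}\|\boldsymbol{\lambda}_{t}\|\,\mathbb{E}\bigl\|\tfrac{1}{T_{0}}\sum_{s=1}^{T_{0}}\boldsymbol{\lambda}_{s}(\sum_{j}w_{j}^{*}\varepsilon_{js}-\varepsilon_{1s})\bigr\|$, and the \ref{Ch1_res_add} and \ref{Ch1_res_non-negativity} restrictions give $0\le w_{j}^{*}\le 1$.

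The main obstacle is that $w_{j}^{*}$ is itself a random function of the pretreatment $\varepsilon$'s, so $\mathbb{E}[w_{j}^{*}\varepsilon_{js}]$ is not zero in general and the right-hand side of the display cannot be handled term by term in expectation. The remedy is to dominate $\sum_{j}w_{j}^{*}|\tfrac{1}{T_{0}}\sum_{s}\lambda_{ks}\varepsilon_{js}|$ by $\max_{j}|\tfrac{1}{T_{0}}\sum_{s}\lambda_{ks}\varepsilon_{js}|$, apply Hölder together with a union bound over the $J+1$ units, and invoke a Rosenthal-type moment inequality for independent-in-$s$ sums. Independence across $s$ (Assumption \ref{Ch1_assume_error} part 1) and the $p$-th moment bound (part 3) give $\mathbb{E}|\tfrac{1}{T_{0}}\sum_{s}\lambda_{ks}\varepsilon_{js}|^{p}=O(T_{0}^{-p/2})$, so after standard bookkeeping the bias is of order $J^{1/p}T_{0}^{-1/2}\to 0$, mirroring the argument in Appendix B of \cite{abadie2010synthetic}.
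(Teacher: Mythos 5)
Your proposal is correct and follows essentially the same route as the paper: your explicit derivation of $\boldsymbol{\mu}_1-\sum_j w_j^*\boldsymbol{\mu}_j$ from the pretreatment-fit restrictions is exactly the paper's decomposition of $e_{1t}$ via $\left({\boldsymbol{\lambda}^{T_0}}'\boldsymbol{\lambda}^{T_0}\right)^{-1}{\boldsymbol{\lambda}^{T_0}}'$, and your treatment of the troublesome term where $w_j^*$ correlates with pretreatment shocks (bounding by the maximal weight, then H\"older and Rosenthal's inequality to get a bound of order $J^{1/p}T_0^{-1/2}$) mirrors the paper's Appendix~\ref{Ch1_sec_proof} argument, which itself follows Appendix B of \cite{abadie2010synthetic}. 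No gaps worth flagging.
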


\begin{remark}
	\normalfont
	If we construct the synthetic control by matching only on the observed predictors, then the estimation suffers from the omitted variable bias since the unobserved predictors are not included. Theorem \ref{Ch1_thm_SC} shows that by matching on the observed predictors and the pretreatment outcomes, the unobserved predictors are implicitly matched as well under the stated assumptions.
	The intuition is that if the treated unit and the synthetic control have very different underlying predictors, then it is unlikely that they would match well on all the pretreatment outcomes as $T_0\rightarrow \infty$ simply due to the random noises.
\end{remark}

\begin{remark}
	\normalfont
	The synthetic control estimator constructed by matching only on the pretreatment outcomes can also be shown to be asymptotically unbiased, albeit with a larger bound on the bias, as shown by \cite{botosaru2019role}.
\end{remark}

\medskip

In practice, there may not be a set of weights that satisfy the restrictions in Assumption \ref{Ch1_assume_fit} exactly, and the weights are chosen as
\begin{align*}
	\left(\widetilde{w}_2,\dots,\widetilde{w}_N\right)=\arg & \min_{\left(w_2,\dots,w_N\right)}\sum_{m=1}^Kv_m\left(Z_{1m}-\sum_{j=2}^NZ_{jm}w_j\right)^2 \numberthis\label{Ch1_eq_op} \\
	                                    & s.t. \enspace \sum_jw_j=1 \enspace\text{and}\enspace w_j\ge0,
\end{align*}
where $Z_{i1},\dots,Z_{iK}$ are the $K$ pretreatment variables to match on, and $v_1,\dots,v_K$ are the weights assigned to these variables, representing the importance of each variable in determining the outcomes.
The pretreatment matching variables may include the observed predictors and the pretreatment outcomes, or some linear combinations of them, e.g., the mean of the observed predictors or the pretreatment outcomes across some pretreatment periods.
$v_1,\dots,v_K$ can be chosen by minimising the mean squared prediction errors in the pretreatment periods, with the option of using cross-validation by splitting the pretreatment periods into a training set and a validation set (see \citealp{abadie2015comparative} and \citealp{abadie2021jel} for details).

Inference for the synthetic control method is based on the permutation test, where the treatment is recursively reassigned to each of the control units, and a synthetic control is constructed to predict the outcomes for the control unit using all the other units including the treated unit. The ratio between the posttreatment RMSPE (root mean squared prediction error) $\left[\frac{1}{T-T_0}\sum_{t=T_0+1}^{T}\left(Y_{jt}-\hat{Y}_{jt}\right)\right]^{1/2}$ and the pretreatment RMSPE $\left[\frac{1}{T_0}\sum_{t=1}^{T_0}\left(Y_{jt}-\hat{Y}_{jt}\right)\right]^{1/2}$ is obtained for each unit, and the distribution of the post/pretreatment RMSPE ratios is used for inference, where a large ratio for the treated unit relative to the control units is considered evidence that the treatment effect is statistically significant.

\subsection{Discussion}\label{Ch1_sec_discuss}

The non-negativity restriction is imposed by \cite{abadie2010synthetic} to safeguard against extrapolation, which happens if the values of the predictors for the treated unit fall outside of the convex hull of those of the control units.\footnote{Interpolation happens when the values of the predictors for the treated unit fall in the convex hull.}
However, being in the convex hull does not necessarily translate to nonnegative weights for all control units. According to the Carathéodory's theorem, it is possible for the treated unit in the convex hull of the control units to be represented by a linear combination of the control units, where some control units are assigned negative weights.\footnote{The Carathéodory's theorem states that if a point $Z\in\mathbb{R}^K$ lies in the convex hull of a set of points $P$, where $|P|>K+1$, then $Z$ is in the convex hull of some $K+1$ points in $P$. In other words, $Z$ can be expressed as an affine combination of the points in $P$, where some $K+1$ points are assigned positive weights, while the other points can receive negative weights.}
Furthermore, there is no extrapolation bias when the outcome is a linear function of the underlying predictors, whereas in the presence of nonlinearity, it is more important to use control units that are closer to the treated unit to reduce interpolation bias rather than restricting the weights to be non-negative. To illustrate this, we provide two simple examples in Figure \ref{Ch1_fig_ex}.

\begin{figure}[!htb]
	\centering
	\begin{subfigure}{.48\textwidth}
		\begin{tikzpicture}[scale=0.55]
			\draw[->] (0,0) -- (10,0) node[anchor=north west] {X};
			\draw[->] (0,0) -- (0,6) node[anchor=south east] {$Y^0$};
			\foreach \x in {0,1,2,3,4,5,6,7,8,9}
			\draw[shift={(\x,0)}] (0pt,3pt) -- (0pt,0pt) node[below] {$\x$};
			\foreach \y in {1,2,3,4,5}
			\draw[shift={(0,\y)}] (3pt,0pt) -- (0pt,0pt) node[left] {$\y$};
			\node at (5,.5) (nodeA) {\scriptsize $X_A$};
			\node at (6,.5) (nodeB) {\scriptsize $X_B$};
			\node at (7,.5) (nodeC) {\scriptsize $X_C$};
			\node at (2,.5) (nodeD) {\scriptsize $X_D$};
			\fill[green] (5,0) circle (.8 mm);
			\foreach \i in {2, 6, 7}
			\fill[black] (\i,0) circle (.8 mm);
			\draw[domain=0:9,smooth,variable=\x,black] plot ({\x,{.6*\x}});
			\draw[red,thick,dashed] (5,3) -- (6,3.6) -- (7,4.2);
			\fill[red] (5,3) circle (.6 mm);
			\fill[red] (6,3.6) circle (.6 mm);
			\fill[red] (7,4.2) circle (.6 mm);
			\draw[blue,thick,dashed](2,1.2) -- (5,3) -- (6,3.6);
			\fill[blue] (2,1.2) circle (.6 mm);
			\fill[blue] (5,3) circle (.6 mm);
			\fill[blue] (6,3.6) circle (.6 mm);
			\fill[black] (5,3) circle (.6 mm);
		\end{tikzpicture}
		\caption{Linear outcome} \label{Ch1_fig_ex1}
	\end{subfigure}
	~
	\begin{subfigure}{.48\textwidth}
		\begin{tikzpicture}[scale=0.55]
			\draw[->] (0,0) -- (10,0) node[anchor=north west] {X};
			\draw[->] (0,0) -- (0,6) node[anchor=south east] {$Y^0$};
			\foreach \x in {0,1,2,3,4,5,6,7,8,9}
			\draw[shift={(\x,0)}] (0pt,3pt) -- (0pt,0pt) node[below] {$\x$};
			\foreach \y in {1,2,3,4,5}
			\draw[shift={(0,\y)}] (3pt,0pt) -- (0pt,0pt) node[left] {$\y$};
			\node at (5,.5) (nodeA) {\scriptsize $X_A$};
			\node at (6,.5) (nodeB) {\scriptsize $X_B$};
			\node at (7,.5) (nodeC) {\scriptsize $X_C$};
			\node at (2,.5) (nodeD) {\scriptsize $X_D$};
			\fill[green] (5,0) circle (.8 mm);
			\foreach \i in {2, 6, 7}
			\fill[black] (\i,0) circle (.8 mm);
			\draw[domain=0:10,smooth,variable=\x,black] plot ({\x,{5/(1 + (exp(1*(3-\x))))}});
			\draw[red,thick,dashed] (5,4.615673) -- (6,4.762871) -- (7,4.910069);
			\fill[red] (5,4.615673) circle (.6 mm);
			\fill[red] (6,4.762871) circle (.6 mm);
			\fill[red] (7,4.910069) circle (.6 mm);
			\draw[blue,thick,dashed](2,1.344707) -- (5,3.90833) -- (6,4.762871);
			\fill[blue] (2,1.344707) circle (.6 mm);
			\fill[blue] (5,3.90833) circle (.6 mm);
			\fill[blue] (6,4.762871) circle (.6 mm);
			\draw[red,thick] (5,4.403985) -- (5,4.615673);
			\draw[blue,thick] (5,4.403985) -- (5,3.90833);
			\fill[black] (5,4.403985) circle (.6 mm);
			\node at (3,5) (nodeA) {\textcolor{red}{Bias 1}};
			\node at (5,4.5) (nodeB) {};
			\node at (3,4) (nodeC) {\textcolor{blue}{Bias 2}};
			\node at (5,4.2) (nodeD) {};
			\draw[->] (nodeA) -- (nodeB) node [left] {};
			\draw[->] (nodeC) -- (nodeD) node [left] {};
		\end{tikzpicture}
		\caption{Nonlinear outcome} \label{Ch1_fig_ex2}
	\end{subfigure}
	\caption{Example} \label{Ch1_fig_ex}
\end{figure}

In both examples, we assume a single observed predictor $X$ of the untreated potential outcome $Y^0$. There is one treated unit $A$ with $X_A=5$ whose untreated potential outcome is not observed, and which we wish to estimate using the outcomes of the control units. Suppose that there are only two control units $B$ and $C$ with $X_B=6$ and $X_C=7$, then we are not able to construct a synthetic control that perfectly matches the treated unit with the non-negativity restriction imposed, even though $X_A=5$ is just outside the convex hull of $X_B=6$ and $X_C=7$. If a third control unit $D$ with $X_D=1$ is available, we can construct a synthetic control using $X_D=1$ and $X_B=6$ under the non-negativity restriction, and compare it with the synthetic control constructed using $X_B=6$ and $X_C=7$ without the non-negativity restriction, both of which perfectly match the treated unit.
In Figure \ref{Ch1_fig_ex1}, the untreated potential outcome is a linear function of the predictor given by $Y^0=0.6X$. We see that there is no extrapolation bias with or without the non-negativity restriction, as both synthetic controls provide perfect estimates for the counterfactual outcome of the treated unit $Y_A^0=3$.
In Figure \ref{Ch1_fig_ex2}, the untreated potential outcome is a nonlinear function of the predictor given by the S-shaped logistic function $Y^0=\frac{5}{1+e^{3-X}}$.\footnote{The intuition applies to other nonlinear functions that satisfy Assumption \ref{Ch1_assume_nY}.} We see that the magnitude of the extrapolation bias ($|\text{Bias}\ 1|\approx 0.21$) for the synthetic control estimator constructed without the non-negativity restriction is smaller than that of the interpolation bias ($|\text{Bias}\ 2|\approx 0.50$) for the synthetic control estimator constructed with the non-negativity restriction, since the former uses closer neighbours.
The moral of this example is that in the presence of nonlinearity, the interpolation bias or the extrapolation bias tends to be smaller if we construct the synthetic control using control units that are closer to the treated unit, whereas the original synthetic control method with the non-negativity restriction does not prefer or implement the use of closer neighbours. For example, suppose that there is an additional control unit $E$ with $X_E=3$, the original synthetic control method does not have a preference on $\frac{1}{3}E+\frac{2}{3}B$ over $\frac{1}{4}D+\frac{3}{4}B$ since both match $X_A$ perfectly with positive weights, even though the bias of the former is smaller in the nonlinear case.

The non-negativity restriction also makes it less likely to obtain a set of weights that satisfy the \ref{Ch1_res_add}, \ref{Ch1_res_non-negativity} and \ref{Ch1_res_perfect-fit} restrictions (Assumption \ref{Ch1_assume_fit}).
Note that without the non-negativity restriction, we need $J\ge L=1+k+T_0$ to be able to find a set of weights that match the synthetic control and the treated unit perfectly, since otherwise the matching variables of the control units do not span $\mathbb{R}^{L}$. For the weights to satisfy the additional \ref{Ch1_res_non-negativity} restriction, it is likely that $J$ needs to be much larger than $L$, especially when $L$ is large due to the curse of dimensionality.
As an example, we generate 1000 samples, with a single treated unit, 10,000 control units and 10 pretreatment periods in each sample. A typical sample is shown in Figure \ref{Ch1_fig_hull1}, where the trajectory of the outcome for the treated unit is depicted in black and the trajectories for 20 of the control units are in gray.\footnote{For the detailed data generating process, see Section \ref{Ch1_sec_sim}.} The levels of the outcome for the units are relatively stable over time, which is similar to what we observe in real data. To make the treated unit even more likely to be in the convex hull of the control units, we calibrate the level of the outcome for the treated unit so that it is at the mean of the outcomes for the control units in period 1.
We then use the method provided by \cite{king2006dangers}, where the convex hull membership check problem is characterised as a linear programming problem, to check whether the treated unit is in the convex hull of the control units.
Figure \ref{Ch1_fig_hull2} records the median sample size required for the treated unit to be in the convex hull of the control units in terms of the corresponding number of pretreatment outcomes. The result suggests that with the non-negativity restriction, the sample size needed for a perfect fit on only a few number of pretreatment periods already exceeds the sample sizes usually available for the synthetic control method.

\begin{figure}[!htb]
	\centering
	\begin{subfigure}{.48\textwidth}
		\centering
		\includegraphics[width=\linewidth]{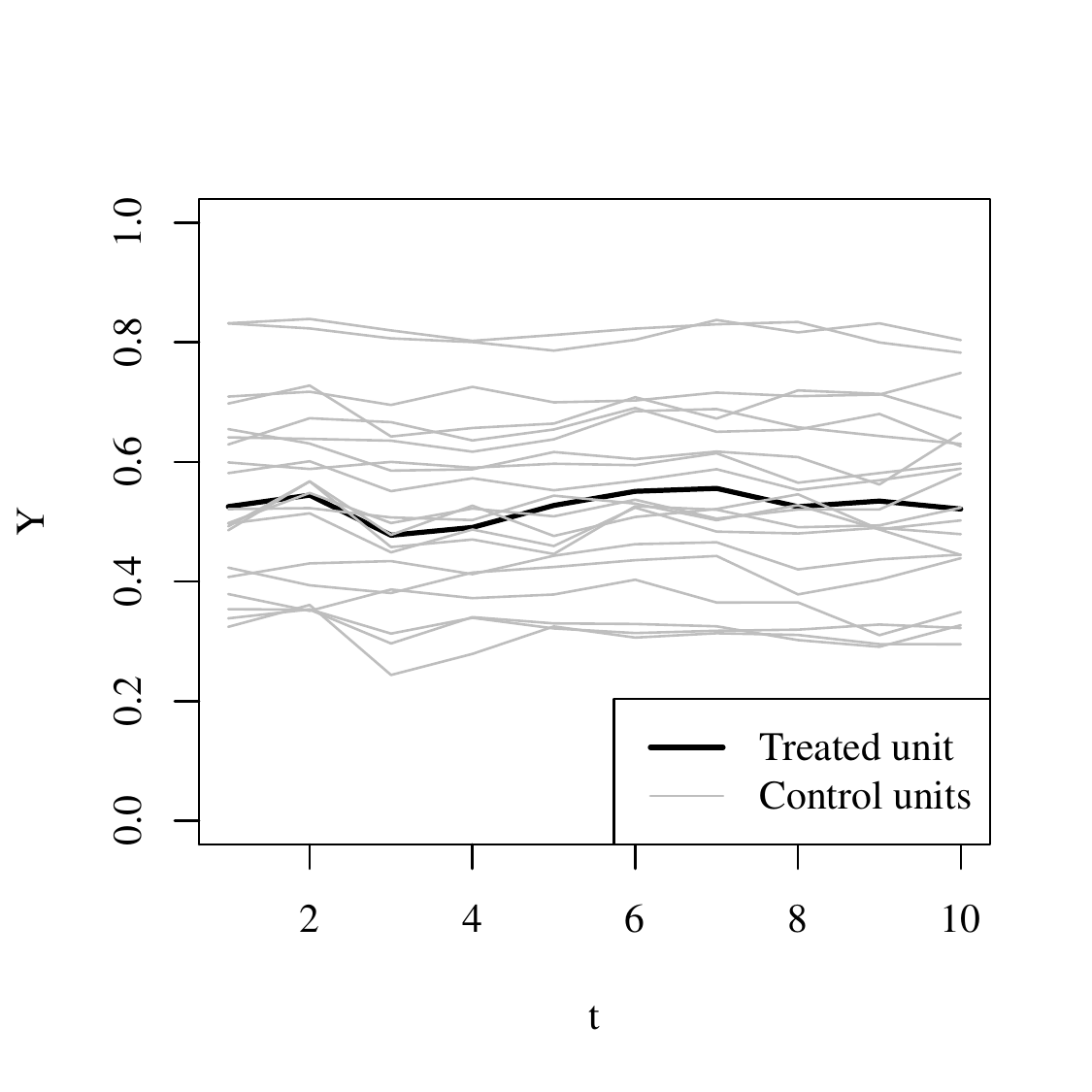}
		\caption{Typical sample}
		\label{Ch1_fig_hull1}
	\end{subfigure}
	~
	\begin{subfigure}{.48\textwidth}
		\centering
		\includegraphics[width=\linewidth]{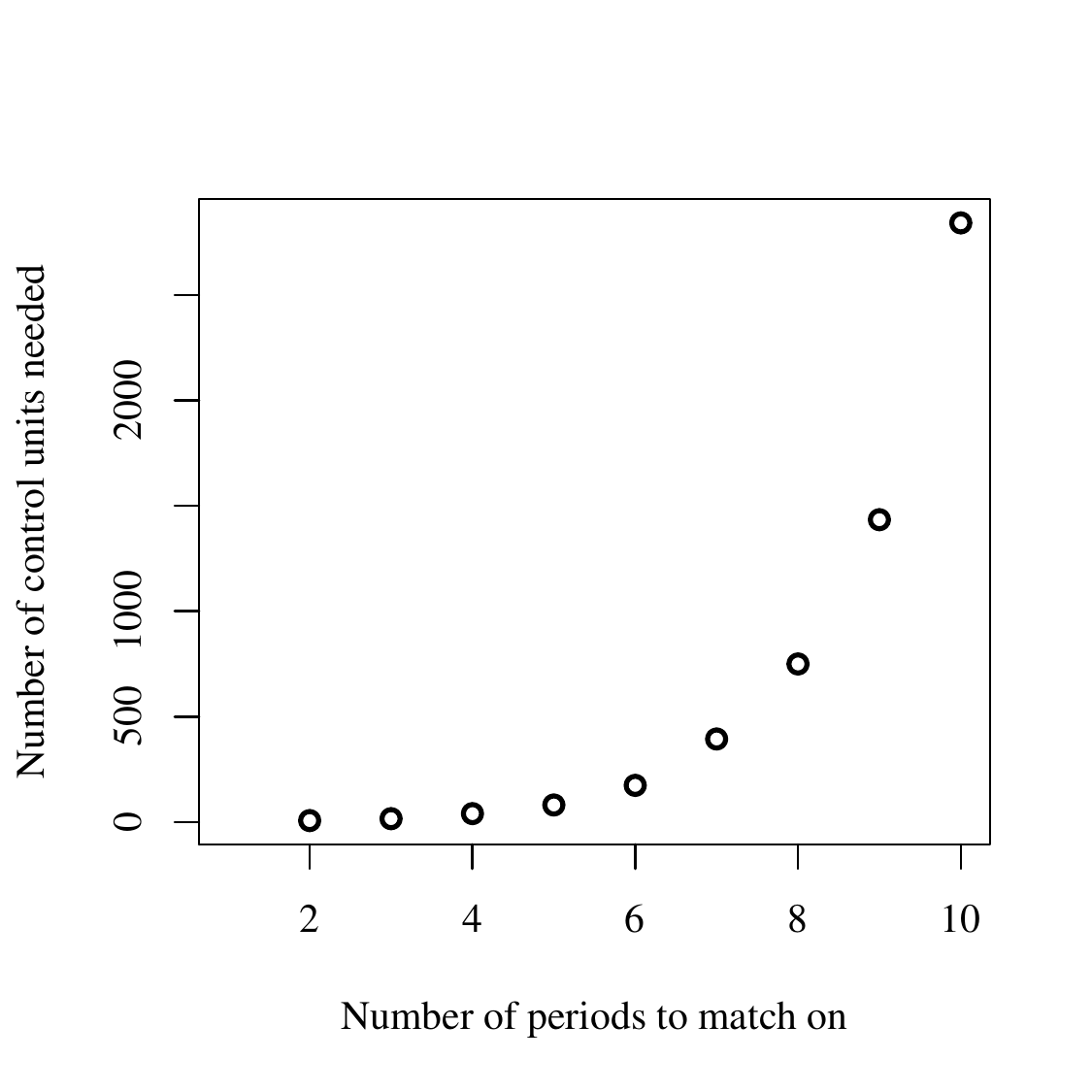}
		\caption{Sample size needed}
		\label{Ch1_fig_hull2}
	\end{subfigure}
	\caption{Simulated Example}
	\label{Ch1_fig_hull}
\end{figure}

Even if we were to construct a synthetic control that matches the treated unit only approximately, the non-negativity restriction makes a good pretreatment fit less likely, especially when the treated unit takes extreme values in the matching variables or when the sample size is small, which limits the applicability of the synthetic control method.\footnote{In cases where the treated unit can be closely approximated by the synthetic control, the non-negativity restriction often ensures that only a few control units receive positive weights, which is an appealing feature in comparative case studies, since it makes it easier to interpret the contribution of each control unit in the construction of the synthetic control.}
\cite{abadie2010synthetic,abadie2015comparative} recommend using the synthetic control method only when the treated unit can be closely approximated by the synthetic control. However, since the units near the boundary of the distribution may not be well approximated by the synthetic controls constructed using the other units with the non-negativity restriction, this indicates that the post/pretreatment RMSPE ratio in the permutation test is conditional on a good pretreatment fit for the treated unit but unconditional for the others, which would lead to over-rejection of the null hypothesis, as noted in \cite{ferman2017placebo}.

In light of the discussion, we relax the non-negativity restriction so that a good pretreatment fit is more likely to be obtained for the treated unit as well as for the other units when conducting inference. This not only expands the applicability of the synthetic control method, but also helps correct the size distortion for inference. In addition, with the non-negativity restriction lifted, we may be able to obtain a synthetic control estimator with a smaller bias using more flexible regularisation methods than if the restriction were imposed, since the solution space of the weights in the former is a superset of that in the latter.

\section{The Synthetic Control Method with Nonlinearity}\label{Ch1_sec_NSC}

The synthetic control estimator is shown to be asymptotically unbiased in \cite{abadie2010synthetic}, provided that the outcome is a linear function of the underlying predictors and that the pretreatment matching variables of the treated unit can be well approximated by those of the synthetic control. When the outcome is nonlinear, however, the bias of the synthetic control estimator may be severe, since a good fit on the pretreatment matching variables between the treated unit and the synthetic control does not necessarily imply a good fit on the unobserved predictors.
In this section, we provide the conditions for the synthetic control estimator to be asymptotically unbiased when the outcome is nonlinear, and propose a flexible and data-driven method for choosing the synthetic control weights in practice.


We start by assuming the following conditions, which are adapted from the assumptions for the matching estimator in \cite{abadie2006large}.

\begin{assume}[]\label{Ch1_assume_dist}
	\leavevmode
	\begin{enumerate}[label=\arabic*)]
		\item Let $\boldsymbol{H}=[\boldsymbol{X}' \enspace \boldsymbol{\mu}']'$ be a $\left(k+f\right)\times 1$ random vector of continuous variables, with a version of the density $c<f\left(\boldsymbol{H}\right)<d$ for some $c,d>0$ on its compact and convex support $\mathbb{H}\in\mathbb{R}^{k+f}$;
		\item $\{\boldsymbol{H}_i\}_{i=1}^N$ are independent draws from the distribution of $\boldsymbol{H}$;
		\item For almost every $\boldsymbol{h}\in\mathbb{H}$, $\text{Pr}\left(D_{i,T_0+1}=1\mid\boldsymbol{H}_i=\boldsymbol{h}\right)<1-\rho$ for some $\rho\in\left(0,1\right)$.
	\end{enumerate}
\end{assume}

\begin{remark}
	\normalfont
	Assumption \ref{Ch1_assume_dist} ensures that the observed and unobserved predictors for all units can be drawn independently from almost any point in the support, and that for almost any values that the predictors of a treated unit take, it is possible to have a control unit whose predictors take those values.
	This assumption is stronger than Assumption \ref{Ch1_assume_fit}, which restricts the application of the synthetic control method to samples where a synthetic untreated ``twin'' for the treated unit can be constructed as a linear combination of the control units. Assumption \ref{Ch1_assume_dist} assumes the existence of individual untreated near-identical twins in the population, and that if the random sample is large enough some of those twins will be in the sample. This stronger assumption is needed in the presence of nonlinearity.
\end{remark}

\medskip

The untreated potential outcome is assumed to be linked with the linear latent outcome through a strictly monotonic function, and that its expectation with respect the individual transitory shock is a smooth function.

\begin{assume}\label{Ch1_assume_nY}
	$Y_{it}^0=F\left(\boldsymbol{X}_i'\boldsymbol{\beta}_t+\boldsymbol{\mu}_i'\boldsymbol{\lambda}_t+\varepsilon_{it}\right)$, where $F\left(\cdot\right)$ is a strictly monotonic function.
	$\mathbb{E}_{\varepsilon}\left(Y_{it}^0\right)=G\left(\boldsymbol{X}_i'\boldsymbol{\beta}_t+\boldsymbol{\mu}_i'\boldsymbol{\lambda}_t\right)$, where $\mathbb{E}_{\varepsilon}\left(\cdot\right)$ is the expectation conditional on $\boldsymbol{X}_i$ and $\boldsymbol{\mu}_i$, and $G\left(\cdot\right)$ is a smooth function.
\end{assume}

\begin{remark}
	\normalfont
	Assuming that $F\left(\cdot\right)$ is a strictly monotonic function excludes binary outcomes since they are usually modelled by discrete choice models like probit or logit, where different values of the latent outcomes can lead to the same observed outcome. Matching on these pretreatment outcomes only implies that the latent outcomes are in the same interval, and there is no guarantee that the unobserved predictors are matched.
	The unknown functions $F\left(\cdot\right)$ and $G\left(\cdot\right)$ need not to be estimated as long as the regularity conditions in Assumption \ref{Ch1_assume_nY} are satisfied, since our goal is to provide conditions for the synthetic control estimator to be asymptotically unbiased when the outcome has an unknown general nonlinear functional form.
\end{remark}

\medskip

Let $\mathbbm{1}\left(\cdot\right)$ be the indicator function, $\Vert\cdot\Vert$ be the Euclidean norm, $\boldsymbol{Z}_i=[\boldsymbol{X}_i' \enspace Y_{i1} \enspace \cdots \enspace Y_{iT_0}]'$, and $\mathcal{J}=\{2,\dots,N\}$, then the set of indices for the $M$ closest neighbours of the treated unit in terms of the observed predictors and the pretreatment outcomes can be denoted as
\begin{equation}\label{Ch1_eq_JM}
	\mathcal{J}_M=\left\{j\in \mathcal{J} \middle| \enspace \sum_{l\in \mathcal{J}} \mathbbm{1}\left(\Vert \boldsymbol{Z}_l-\boldsymbol{Z}_1 \Vert < \Vert \boldsymbol{Z}_j-\boldsymbol{Z}_1 \Vert\right) < M\right\},
\end{equation}
i.e., $\Vert \boldsymbol{Z}_l-\boldsymbol{Z}_1 \Vert > \Vert \boldsymbol{Z}_j-\boldsymbol{Z}_1 \Vert$ for any $l\in\mathcal{J}\backslash\mathcal{J}_M$ and $j\in\mathcal{J}_M$, where $\backslash$ takes the difference of two sets.

\begin{assume}\label{Ch1_assume_neighbor}
	Only the nearest $M>k+T_0$ neighbours are used for constructing the synthetic control, i.e.,
	$\sum_{j\in\mathcal{J}_M}w_j^*=1$, $\sum_{j\in\mathcal{J}_M} w_j^*\boldsymbol{X}_j=\boldsymbol{X}_1$ and $\sum_{j\in\mathcal{J}_M} w_j^*Y_{jt}=Y_{1t}$ for all $t\le T_0$, and $w_j^*=0$ for all $j\not\in\mathcal{J}_M.$
\end{assume}

\medskip

Using the nearest $M$ neighbours, we can construct the synthetic control estimator as $\tilde{\tau}_{1t}=Y_{1t}-\sum_{j\in\mathcal{J}_M} w_j^*Y_{jt}$. The following theorem provides the conditions for the synthetic control estimator to be asymptotically unbiased when the outcome is nonlinear.

\begin{theorem}\label{Ch1_thm_NSC}
	Under Assumptions \ref{Ch1_assume_error}, \ref{Ch1_assume_rank}, \ref{Ch1_assume_dist}, \ref{Ch1_assume_nY} and \ref{Ch1_assume_neighbor},
	$\mathbb{E}\left(\tilde{\tau}_{1t}-\tau_{1t}\right)\rightarrow 0 \enspace \text{as} \enspace T_0\rightarrow \infty$ if $J=O\left(T_0^{b(T_0)}\right)$ with $b(\cdot)\ge 1$ and $b'(\cdot)>0$.
\end{theorem}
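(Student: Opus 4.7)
The plan is to decompose the bias into a leading term driven by the unobserved-predictor discrepancy, as in the linear proof of Theorem \ref{Ch1_thm_SC}, plus a new remainder coming from the nonlinearity of $G$. Writing $\theta_{it}=\boldsymbol{X}_i'\boldsymbol{\beta}_t+\boldsymbol{\mu}_i'\boldsymbol{\lambda}_t$, Assumption \ref{Ch1_assume_nY} lets me replace $Y_{it}^0$ by $G(\theta_{it})$ after taking the expectation over $\varepsilon_{it}$, so the bias becomes $\mathbb{E}[G(\theta_{1t})-\sum_{j\in\mathcal{J}_M}w_j^*G(\theta_{jt})]$. A second-order Taylor expansion of $G$ around $\theta_{1t}$, combined with $\sum_j w_j^*=1$ and $\sum_j w_j^*\boldsymbol{X}_j=\boldsymbol{X}_1$, eliminates the constant term and the $\boldsymbol{X}$-part of the linear term; what remains is a first-order piece proportional to $(\sum_j w_j^*\boldsymbol{\mu}_j-\boldsymbol{\mu}_1)'\boldsymbol{\lambda}_t$ plus a second-order remainder bounded by $\sum_j|w_j^*|(\theta_{jt}-\theta_{1t})^2$.

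The second step is to control the unobserved-predictor discrepancy $\sum_j w_j^*\boldsymbol{\mu}_j-\boldsymbol{\mu}_1$ by adapting the Abadie--Diamond--Hainmueller argument to the nonlinear setting. Because $F$ is strictly monotonic and, by Assumption \ref{Ch1_assume_nY}, $F^{-1}$ is smooth on the relevant compact range, applying $F^{-1}$ to the pretreatment-fit equations $\sum_j w_j^*Y_{j\tau}=Y_{1\tau}$ and linearising gives, for each $\tau\le T_0$, a relation of the form $(\sum_j w_j^*\boldsymbol{\mu}_j-\boldsymbol{\mu}_1)'\boldsymbol{\lambda}_\tau=\varepsilon_{1\tau}-\sum_j w_j^*\varepsilon_{j\tau}+r_\tau$, where $r_\tau$ collects Taylor remainders of order $\max_{j\in\mathcal{J}_M}\|\boldsymbol{Z}_j-\boldsymbol{Z}_1\|^2$. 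Multiplying by $\boldsymbol{\lambda}_\tau'$, averaging over $\tau=1,\dots,T_0$ and inverting via Assumption \ref{Ch1_assume_rank} recasts $\sum_j w_j^*\boldsymbol{\mu}_j-\boldsymbol{\mu}_1$ as an average of mean-zero noise---which is zero in expectation and $O(1/T_0)$ in variance under Assumption \ref{Ch1_assume_error}---plus the controlled remainder.

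The third step is to show that the pairwise discrepancies $\max_{j\in\mathcal{J}_M}\|\boldsymbol{Z}_j-\boldsymbol{Z}_1\|$ vanish. Assumption \ref{Ch1_assume_dist} places $\boldsymbol{H}$ on a compact, convex support with density bounded away from zero, so a standard volumetric argument bounds the $M$-th nearest-neighbor radius of $\boldsymbol{Z}_1$ among $J$ draws by a rate of the form $J^{-1/(k+T_0)}$, up to noise contributions from the pretreatment outcomes that are handled exactly as in step two. The growth condition $J=O(T_0^{b(T_0)})$ with $b(T_0)\ge1$ and $b'(T_0)>0$ is designed so that $\log J/(k+T_0)\to\infty$, guaranteeing that the growth of $J$ outruns the expansion of the matching dimension and that the nearest-neighbor radius collapses to zero. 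Once the neighbors are near-identical twins in $(\boldsymbol{X},\boldsymbol{\mu})$, the pairwise latent-index discrepancies $|\theta_{jt}-\theta_{1t}|$ vanish and the Taylor remainders from the first two steps disappear.

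The hard part will be handling the simultaneous growth of two scales. The matching dimension $k+T_0$ grows with the very $T_0$ that the rank condition uses to wash out the transitory shocks, so the naive nearest-neighbor rate degrades as $T_0$ increases and must be overcome by the growth of $J$; calibrating the condition on $b(\cdot)$ is the crux. A second, subtler difficulty is that, because non-negativity has been dropped, $\sum_j|w_j^*|$ is not bounded a priori, so the $G''$ remainder is not automatically negligible. I would expect to bound the $\ell_1$-norm of the weights by exploiting $M>k+T_0$ together with the near-uniformity of the nearest neighbors around $\boldsymbol{Z}_1$ under Assumption \ref{Ch1_assume_dist}, which keeps the linear system for the weights well-conditioned once properly rescaled by the nearest-neighbor radius. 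Assembling the three steps then yields $\mathbb{E}[\tilde{\tau}_{1t}-\tau_{1t}]\to0$.
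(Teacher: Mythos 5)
Your proposal follows essentially the same route as the paper's proof: Taylor-expand $G$ around the treated unit's latent index $\boldsymbol{X}_1'\boldsymbol{\beta}_t+\boldsymbol{\mu}_1'\boldsymbol{\lambda}_t$, use strict monotonicity of $F$ together with Assumption \ref{Ch1_assume_rank} to convert closeness in $\boldsymbol{Z}$ into closeness in $\boldsymbol{\mu}$, bound the nearest-neighbour discrepancy by $O_p\left(J^{-\frac{1}{1+k+T_0}}\right)$ via the compact-support, bounded-density condition in Assumption \ref{Ch1_assume_dist} (Lemma 1 of Abadie and Imbens, 2006), and calibrate the growth condition on $J$ so that this radius still collapses while the matching dimension $1+k+T_0$ grows with $T_0$ --- exactly the paper's analysis of $\exp\left\{\frac{b(T_0)}{1+k+T_0}\ln T_0\right\}$ via L'H\^opital. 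The differences are minor rather than a different route: the paper expands $G$ to all orders and kills every term by driving each pairwise discrepancy $e_{jt}$ to zero, whereas you truncate at second order and handle the first-order $\boldsymbol{\mu}$-discrepancy by the aggregate argument of Theorem \ref{Ch1_thm_SC}; and your explicit concern about $\sum_j\vert w_j^*\vert$ being unbounded once non-negativity is dropped is a legitimate point that the paper's proof passes over silently. One slip to correct in your step two: the shock average is not mean-zero, because $w_j^*$ is correlated with the pretreatment shocks $\varepsilon_j^{T_0}$ (the paper flags precisely this in the proof of Theorem \ref{Ch1_thm_SC}); it must instead be bounded via the H\"{o}lder/Rosenthal argument used there, which still vanishes as $T_0\rightarrow\infty$, so your overall plan goes through.
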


\begin{remark}
	\normalfont
	Theorem \ref{Ch1_thm_NSC} states that the synthetic control estimator is asymptotically unbiased if the outcome is a strictly monotonic function with smooth expectation, the synthetic control is constructed using the nearest $M$ neighbours, and that the number of control units increases more than exponentially faster than the number of pretreatment periods.
	The idea is that the bias of the synthetic control estimator will not vanish asymptotically when the outcome is an unknown nonlinear function, unless each control unit used for constructing the synthetic control converges to the treated unit in the underlying predictors, which happens if the synthetic control is constructed using the nearest neighbours, and that the number of control units increases much faster than the number of pretreatment periods so that the nearest neighbours become closer and closer to the treated unit.
	This result does not conflict with the fact that the sample sizes available for the synthetic control method are usually small, as it just provides the conditions for the existence of control units that are similar to the treated unit.
	The implication of this result in finite samples is that when the outcome is highly nonlinear, the bias of the synthetic control estimator is expected to be small if there exist control units that are similar to the treated unit and that we construct the synthetic control using only those control units.\footnote{Note that this is different from the sparsity of the weights under the non-negativity restriction, which is achieved regardless of the degree of nonlinearity, as the few control units that receive positive weights are not necessarily close to the treated unit.}
\end{remark}

\medskip

While Theorem \ref{Ch1_thm_NSC} provides conditions for the synthetic control estimator to be asymptotically unbiased when the outcome is nonlinear, it is also important to examine the bias in finite samples.
From the proof of Theorem \ref{Ch1_thm_NSC}, we have
\begin{align}
	\mathbb{E}_{\varepsilon}\left(\sum_j w_j^*Y_{jt}-Y_{1t}^0\right)
	= \sum_{n=1}^{\infty}\frac{G^{\left(n\right)}}{n!}\left\{\sum_j w_j^*\left[\left(\boldsymbol{X}_j-\boldsymbol{X}_1\right)'\boldsymbol{\beta}_{t}+\left(\boldsymbol{\mu}_j-\boldsymbol{\mu}_1\right)'\boldsymbol{\lambda}_{t}\right]^n\right\},
\end{align}
which represents the weighted average of the distances between the treated unit and the control units in terms of the underlying predictors.
Under the assumptions of Theorem \ref{Ch1_thm_NSC}, $\boldsymbol{Z}_j-\boldsymbol{Z}_1\overset{p}{\rightarrow}0$ implies that $\boldsymbol{\mu}_j-\boldsymbol{\mu}_1\overset{p}{\rightarrow}0$ for $j\in\mathcal{J}_M$.
Therefore, when the outcome is highly nonlinear, the bias of the synthetic control estimator would be smaller if we use fewer and closer neighbours to construct the synthetic control so that the pairwise matching discrepancies in the pretreatment variables between the treated unit and its neighbours are smaller.

When the outcome is linear or if the degree of nonlinearity is low, the bias of the synthetic control estimator reduces to
\begin{align}
	\mathbb{E}_{\varepsilon}\left(\sum_j w_j^*Y_{jt}-Y_{1t}^0\right)
	= \left(\sum_j w_j^*\boldsymbol{X}_j-\boldsymbol{X}_1\right)'\boldsymbol{\beta}_{t}+\left(\sum_j w_j^*\boldsymbol{\mu}_j-\boldsymbol{\mu}_1\right)'\boldsymbol{\lambda}_{t},
\end{align}
representing the distance between the treated unit and the synthetic control in the underlying predictors, and is smaller if we construct the synthetic control using more neighbours so that the matching discrepancy between the treated unit and the synthetic control is smaller. The bias is also shown in the proof of Theorem \ref{Ch1_thm_SC} to be bounded by a value that increases with $\bar{w}=\text{max}_j|w_j^*|$ given a good pretreatment fit between the treated unit and the synthetic control constructed by $J$ control units, and thus is smaller if the weights are assigned more evenly among the control units. The variance of the synthetic control estimator also tends to be smaller if the weights are more spread out, which is similar to the least square estimator, whose variance becomes smaller if the sample size is larger or if the explanatory variables are more spread out.

This presents a trade-off between the aggregate matching discrepancy and the pairwise matching discrepancies, depending on the degree of nonlinearity of the outcome function, similar to the bias-variance tradeoff in non-parametric methods, e.g., choosing bin width in kernel density estimation.
To address this trade-off, we choose the set of weights by solving the following minimisation problem with elastic net type penalties,
\begin{align*}
	\min_{\{w_j\}_j} & \left\Vert \boldsymbol{Z}_1-\sum_j w_j\boldsymbol{Z}_j \right\Vert^2+a\sum_j \left\vert w_j \right\vert \left\Vert \boldsymbol{Z}_1-\boldsymbol{Z}_j \right\Vert+b\sum_j \left\vert w_j \right\vert^2, \numberthis \\
	                 & \text{s.t.} \sum_j w_j=1.
\end{align*}

The $L_1$ penalty terms are weighted by pairwise matching discrepancies between the treated unit and the control units, and penalise assigning weights to control units that are farther away from the treated unit. The $L_2$ penalty term penalises concentrating weights on a few control units and controls the scale of $\bar{w}$.
The level of penalisation is adjusted through the nonnegative tuning parameters, $a$ and $b$.
When $a=0$ and $b=0$, the weights are chosen solely to minimise the aggregate matching discrepancy.
When $a$ becomes larger, the weights are more concentrated on control units that are closer to the treated unit, thus achieving sparsity of the weights. As $a\rightarrow\infty$, the estimator becomes the nearest neighbours matching estimator using only the nearest neighbour, as noted in \cite{abadie2020penalized}.
When $b$ becomes larger, the weights are assigned more evenly among the control units. As $b\rightarrow\infty$, all the control units are assigned equal weights and the estimator becomes the difference-in-differences estimator.
When both $a$ and $b$ are large, the weights are spread out among a number of control units that are close to the treated unit, and the estimator becomes close to the nearest neighbours matching estimator using multiple neighbours.

Ultimately choosing the optimal tuning parameters is an empirical problem in finite samples, which can be done using cross-validation, as proposed in \cite{doudchenko2017} and \cite{abadie2020penalized}.
One way to conduct cross-validation is to predict the posttreatment outcomes for each control unit using the synthetic control constructed from the other control units, and the optimal set of tuning parameters is the one that minimises the mean squared prediction error.
This method of conducting cross-validation is used for the Monte Carlo simulations and the applications in this paper.
Alternatively, we can predict the outcome of the treated unit in each pretreatment period using the synthetic control constructed from the control units, and select the set of tuning parameters that minimises the mean squared prediction error.
To make the selection of the optimal tuning parameters tractable in practice, the tuning parameters that enter the minimisation problem, $a$ and $b$, are scaled by the nonzero eigenvalues of $\boldsymbol{Z}_0\boldsymbol{Z}_0'$, where $\boldsymbol{Z}_0$ is the $J\times (k+T_0)$ matrix of matching variables of the control units, so that the optimal tuning parameters $a^*$ and $b^*$ can be chosen from $[0,1]$. Specifically, $\boldsymbol{Z}_0\boldsymbol{Z}_0'$ has $n=\min(J,\ k+T_0)$ nonzero eigenvalues, denoted as $\lambda_1,\dots,\lambda_n$, in ascending order.
For $b^*\in[0,1]$, we set $b=b^*\lambda_{\lceil nb^* \rceil}$, where $\lceil \cdot \rceil$ is the ceiling function, so that when $a^*=0$ and $b^*=1$, the weights are roughly evenly assigned to the control units. $a$ is similarly scaled by the nonzero eigenvalues of $\boldsymbol{Z}_0\boldsymbol{Z}_0'+\text{diag}(b)$, so that when $a^*=1$, the weight will only be assigned to the nearest neighbour.
To select the optimal tuning parameters, we start with the initial value $b^*=0$, and then choose $a^*$ from $[0,1]$ with a set grid size, e.g., 0.1, to minimise the mean squared prediction error from either cross-validation construction. Given the selected $a^*$, we then update $b^*$ by minimising the mean squared prediction error. $a^*$ and $b^*$ are then updated iteratively until convergence.

\section{Monte Carlo Simulations}\label{Ch1_sec_sim}

In this section, we conduct Monte Carlo simulations to compare the  nonlinear synthetic control estimator (NSC) with the original synthetic control estimator (OSC) from \cite{abadie2010synthetic}, the synthetic control estimator with elastic net regularisation (ESC) from \cite{doudchenko2017}, and the penalised synthetic control estimator (PSC) from \cite{abadie2020penalized}. For the purpose of comparison, these other methods are modified so that OSC differs from NSC in that it imposes the non-negativity restriction and does not have $L_1$ and $L_2$ penalties, ESC differs from NSC in that the $L_1$ penalty terms are not weighted by pairwise matching discrepancies, and PSC differs from NSC in that it does not have $L_2$ penalty.
The number of treated unit is fixed at 1, and the number of posttreatment period is fixed at 10 across settings.
The data generating process is as follows.

First, the latent outcomes are generated from the interactive fixed effects model as
\begin{equation}
	Y_{it}^*=\boldsymbol{X}_i'\boldsymbol{\beta}_t+\boldsymbol{\mu}_i'\boldsymbol{\lambda}_t+\varepsilon_{it},
\end{equation}
where the vector of observed predictors $\boldsymbol{X}_i$ has dimension 2, and the vector of unobserved predictors $\boldsymbol{\mu}_i$ has dimension 4.
The observed and unobserved predictors are independently and identically drawn from the uniform distribution $U\left[0,2\sqrt{3}\right]$ for each unit, the coefficients are i.i.d. $N(10,1)$, and the individual transitory shocks are i.i.d. $N(0,1)$.

The untreated potential outcomes are then generated as
\begin{equation}\label{Ch1_eq_tran}
	Y_{it}^0=\left(\frac{Y^*_{it}-Y^*_{\text{min}}}{Y^*_{\text{max}}-Y^*_{\text{min}}}\right)^r,
\end{equation}
where $Y^*_{\text{min}}$ and $Y^*_{\text{max}}$ are the smallest and largest values of $Y^*_{it}$ respectively, so that $\frac{Y^*_{it}-Y^*_{\text{min}}}{Y^*_{\text{max}}-Y^*_{\text{min}}}$ is between 0 and 1.\footnote{The transformation in \eqref{Ch1_eq_tran} is random since $Y^*_{\text{min}}$ and $Y^*_{\text{max}}$ depend on the sample. The purpose of rescaling the outcomes to be within $[0,1]$ is to make results in different settings more comparable, and the findings do not fundamentally change if we use some non-random transformation.} 
The degree of nonlinearity is adjusted by $r\in \{1,2\}$. $Y_{it}^0$ is a linear function of the predictors and the individual transitory shock when $r=1$, and is nonlinear when $r=2$.\footnote{Using a larger $r$ or adopting other functional forms such as the logistic function considered in Figure \ref{Ch1_fig_ex2} does not fundamentally change the conclusion.}

The treatment effects $\tau_{it}$ are set to $[0.02,0.04,\dots,0.2]$ in the 10 posttreatment periods, and 0 in the pretreatment periods.
And the observed outcomes are generated as
\begin{equation}
	Y_{it}=Y_{it}^0+D_{it}\tau_{it},
\end{equation}
where
\begin{equation*}
	D_{it}=
	\begin{cases}
		1, & \text{if}\ i=1\ \text{and}\ t>T_0, \\
		0, & \text{otherwise}.
	\end{cases}
\end{equation*}

By varying the number of control units $J\in \{25,50\}$, the number of pretreatment periods $T_0\in \{15,30\}$ and the degree of nonlinearity $r\in \{1,2\}$, we have 8 settings. The observed and unobserved predictors and their coefficients are drawn 20 times for each setting, and the individual transitory shocks are drawn 250 times for each set of $(\boldsymbol{X}_i,\boldsymbol{\beta}_t,\boldsymbol{\lambda}_t,\boldsymbol{\mu}_i)$, so that we generate 5000 samples for each setting.\footnote{To save computation time, for each set of $(\boldsymbol{X}_i,\boldsymbol{\beta}_t,\boldsymbol{\lambda}_t,\boldsymbol{\mu}_i)$, the tuning parameters are chosen once from $[0,1]$ with grid size 0.1 using cross-validation, and are then fixed across the 250 simulations.}

\begin{figure}[!htb]
	\centering
	\begin{subfigure}{\textwidth}
		\centering
		\includegraphics[width=.49\linewidth]{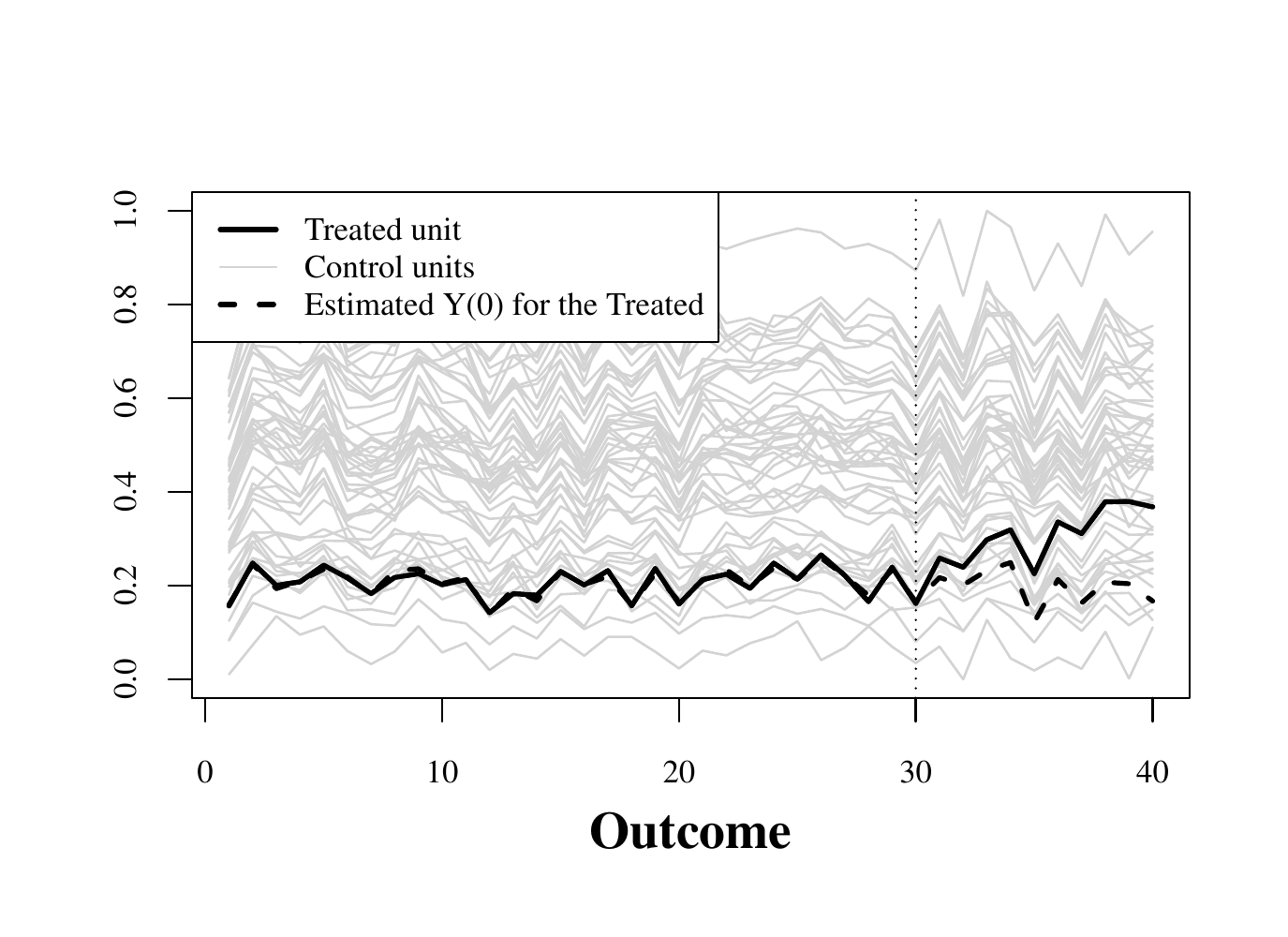}
		\includegraphics[width=.49\linewidth]{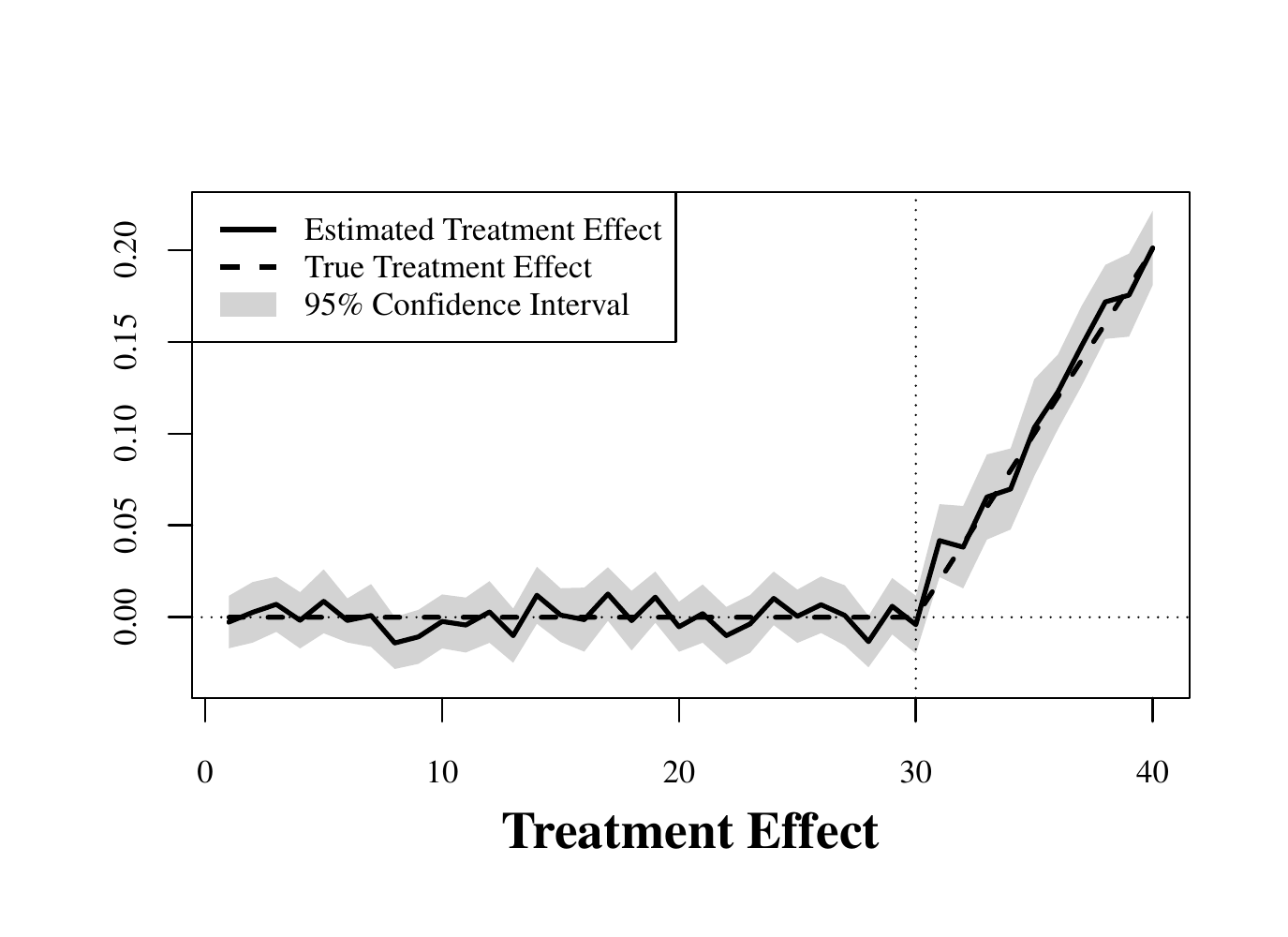}
		\caption{Linear outcome}
		\label{Ch1_fig_sim_ex1}
	\end{subfigure}
	\\
	\begin{subfigure}{\textwidth}
		\centering
		\includegraphics[width=.49\linewidth]{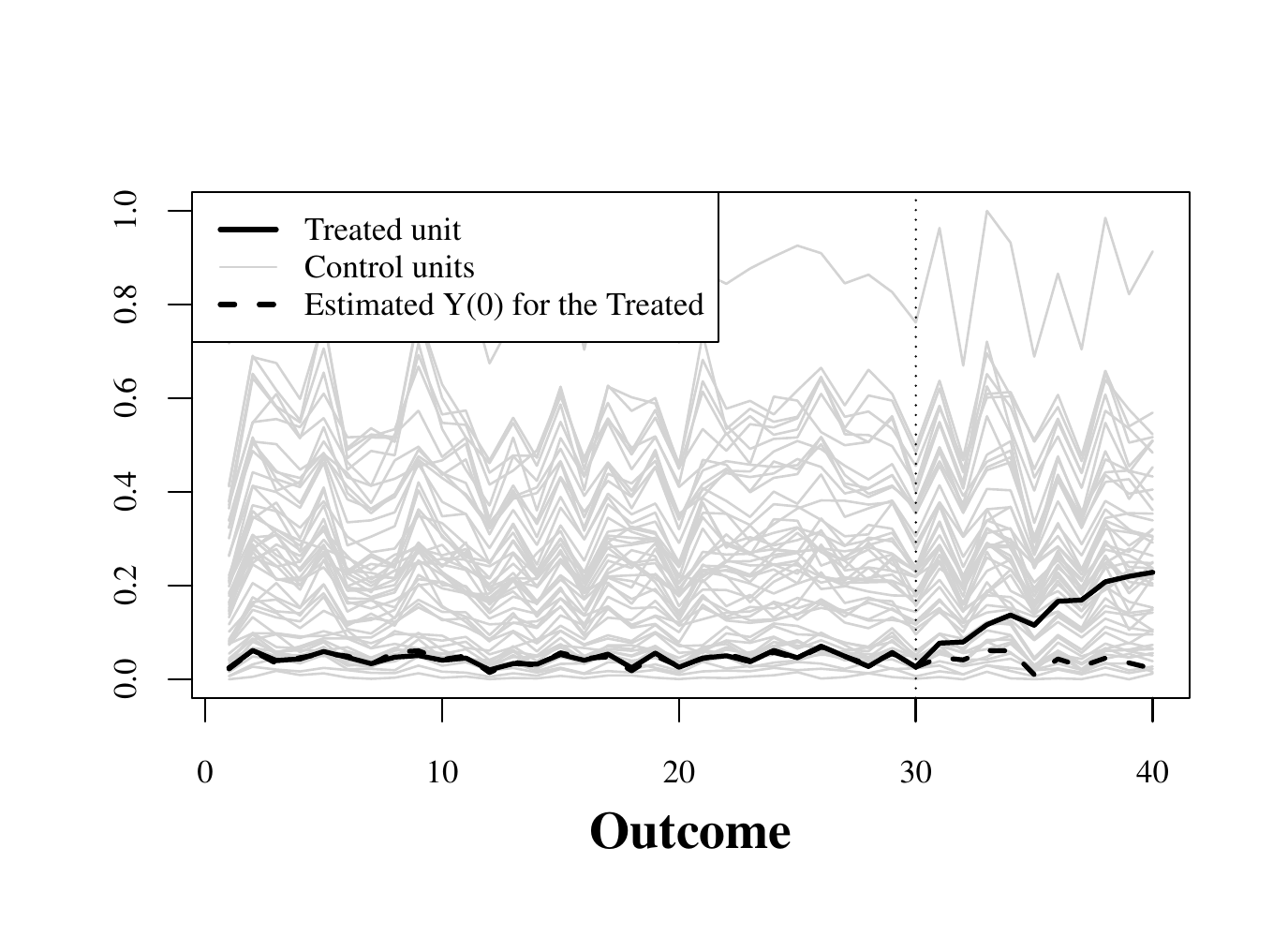}
		\includegraphics[width=.49\linewidth]{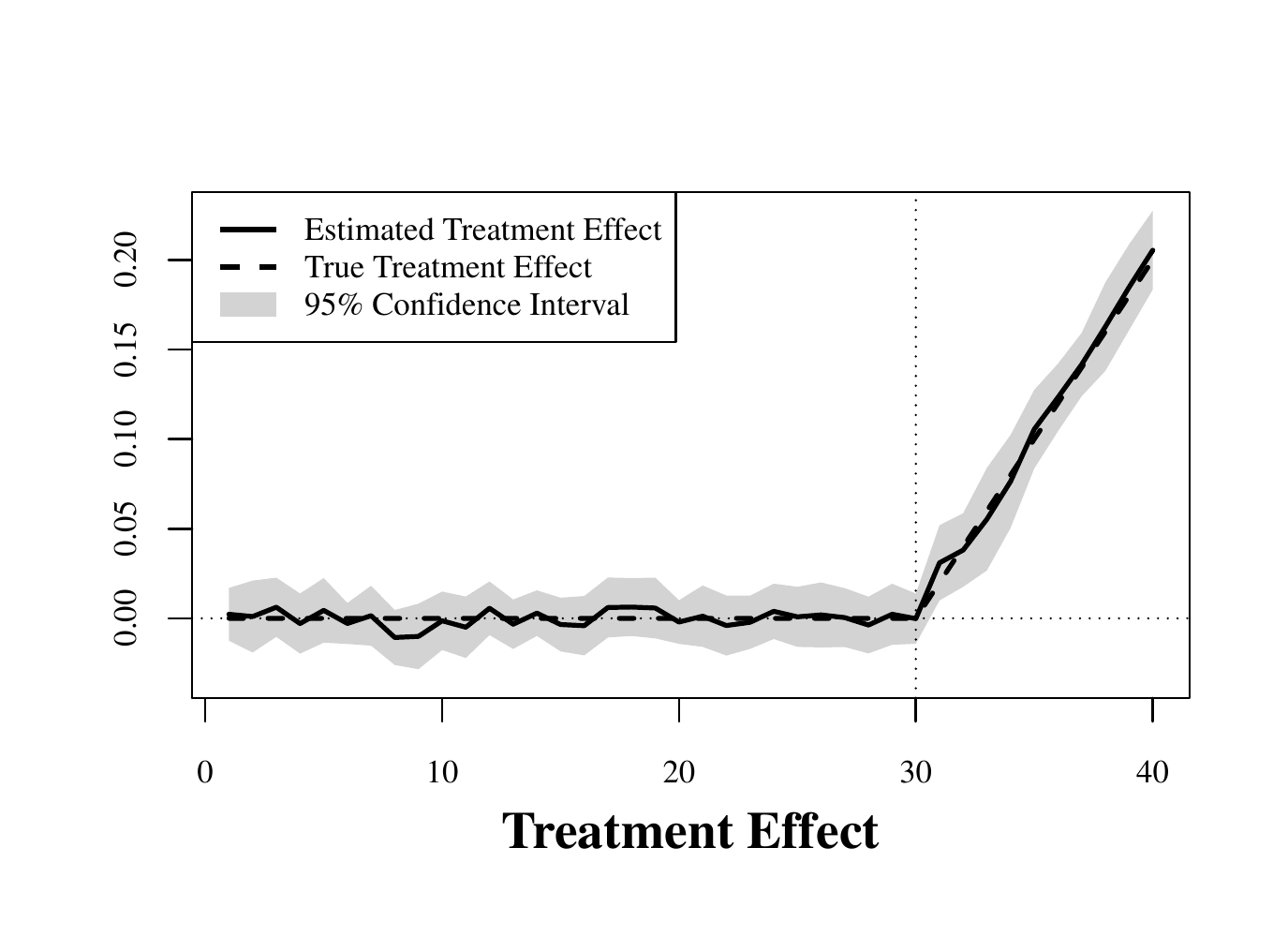}
		\caption{Nonlinear outcome}
		\label{Ch1_fig_sim_ex2}
	\end{subfigure}
	\caption{Simulated Example}
	\label{Ch1_fig_sim_ex}
\end{figure}

Figure \ref{Ch1_fig_sim_ex} illustrates the estimation of the treatment effects using the nonlinear synthetic control method in a typical sample with $J=50$, $T_0=30$ in the linear case ($r=1$, upper panel) and nonlinear case ($r=2$, lower panel), respectively. The graphs on the left visualise the trajectories of the outcome for the treated unit (black solid line), the control units (gray solid lines) and the synthetic control (black dashed line). The observed outcomes are more concentrated towards the bottom in the nonlinear case due to the nonlinear transformation.
In both examples, we see that the trajectory of the outcome for the synthetic control, which is used to estimate the untreated potential outcome for the treated unit, is able to follow the trajectory of the treated unit closely before the treatment, and diverges after the treatment.
The gap between the trajectories of the treated unit and the synthetic control is then used to estimate the treatment effect, as depicted in the graphs on the right, where the black solid line is the estimated treatment effect, the black dashed line is the true treatment effect, and the 95\% confidence interval is in gray.\footnote{To construct the confidence interval, we follow \cite{doudchenko2017} and estimate the variance of the estimator in each period using the mean squared error obtained from predicting the outcome for each control unit in that period using the other control units. The confidence interval is then constructed using the estimated variance, assuming normal distribution for the estimator.} We see that the estimated effects are very close to the true treatment effects, and the confidence intervals also accurately reveal that the treatment effect is not statistically significantly different from 0 before the treatment, and becomes significant after the treatment.

\begin{center}
	\resizebox{13cm}{!}{
		\begin{threeparttable}
			\centering
			\caption{Monte Carlo Results}\label{Ch1_tab_sim}
			\begin{tabular}{ccc@{\hskip 0.5ex}ccc@{\hskip 0.5ex}ccc@{\hskip 0.5ex}ccc@{\hskip 0.5ex}cccc}
				\toprule
				\multicolumn{4}{c}{} & \multicolumn{2}{c}{OSC} & \multicolumn{1}{c}{} & \multicolumn{2}{c}{ESC} & \multicolumn{1}{c}{} & \multicolumn{2}{c}{PSC} & \multicolumn{1}{c}{} & \multicolumn{3}{c}{NSC}                                                     \\
				\cmidrule(lr){5-6} \cmidrule(lr){8-9} \cmidrule(lr){11-12} \cmidrule(lr){14-16}
				$J$                  & $T_0$                   & $r$                  &                         & Bias                 & SD                      &                      & Bias                    & SD   &  & Bias & SD   &  & Bias & SD   & Coverage \\
				\midrule
				\addlinespace[2ex]
				\multicolumn{16}{c}{Panel A: Linear Outcome} \\
        		\addlinespace[1ex]
				25 & 15 & 1 &  & 2.17 & 1.07 &  & 0.99 & 1.17 &  & 1.11 & 1.32 &  & 0.99 & 1.18 & 0.936 \\ 
  50 & 15 & 1 &  & 1.19 & 0.94 &  & 0.77 & 0.93 &  & 0.87 & 1.06 &  & 0.77 & 0.94 & 0.944 \\ 
  25 & 30 & 1 &  & 1.79 & 1.03 &  & 0.90 & 1.09 &  & 0.98 & 1.19 &  & 0.91 & 1.10 & 0.936 \\ 
  50 & 30 & 1 &  & 1.13 & 0.92 &  & 0.75 & 0.92 &  & 0.83 & 1.02 &  & 0.75 & 0.92 & 0.944 \\ 
				\addlinespace[2ex]
				\multicolumn{16}{c}{Panel B: Nonlinear Outcome} \\
        		\addlinespace[1ex]
				25 & 15 & 2 &  & 1.99 & 0.91 &  & 0.94 & 1.02 &  & 0.97 & 1.10 &  & 0.92 & 1.02 & 0.938 \\ 
  50 & 15 & 2 &  & 1.11 & 0.81 &  & 0.77 & 0.84 &  & 0.80 & 0.94 &  & 0.74 & 0.83 & 0.947 \\ 
  25 & 30 & 2 &  & 1.40 & 0.85 &  & 0.87 & 0.96 &  & 0.91 & 1.03 &  & 0.87 & 0.97 & 0.935 \\ 
  50 & 30 & 2 &  & 1.01 & 0.78 &  & 0.69 & 0.80 &  & 0.73 & 0.86 &  & 0.68 & 0.79 & 0.950 \\ 
				\bottomrule
			\end{tabular}
			\begin{tablenotes}
				\item Note: This table compares the bias and SD of the nonlinear SC estimator and three other SC estimators from \cite{abadie2010synthetic}, \cite{doudchenko2017} and \cite{abadie2020penalized} respectively, and reports the coverage probability of the 95\% confidence interval produced by the nonlinear SC method, in different settings that vary in the number of control units $J$, the number of pretreatment periods $T_0$ and the degree of nonlinearity $r$, based on 5000 simulations for each setting.
			\end{tablenotes}
		\end{threeparttable}
	}
\end{center}

\medskip

Table \ref{Ch1_tab_sim} reports the bias and the standard deviation (SD) for each estimator, as well as the coverage probability of the 95\% confidence interval produced by the nonlinear synthetic control estimator in different settings.\footnote{The bias is measured using the average of the absolute biases across the posttreatment periods and the 5000 simulations for each setting. The standard deviation is calculated for each posttreatment period and each set of $(\boldsymbol{X}_i,\boldsymbol{\beta}_t,\boldsymbol{\lambda}_t,\boldsymbol{\mu}_i)$ and then averaged over the posttreatment periods and the 20 sets for each setting. Both the bias and the SD are multiplied by 100 for better presentation.}
With a larger $J$ or $T_0$, the bias and SD become smaller for all the estimators, and the coverage probability of the 95\% confidence interval also improves, in both the linear and nonlinear cases.
Compared with the original estimator, the other estimators have smaller biases across different settings, showing the advantage of more flexible regularisation methods over the non-negativity restriction in reducing the bias of the estimators, whereas the non-negativity restriction has an edge on keeping the SD low.
The bias and SD are on similar levels for ESC and NSC in the linear cases, and are smaller than those of PSC, as ESC and NSC use the $L_2$ regularisation to control the scale of the weights. The advantage of ESC over PSC becomes smaller in the nonlinear cases, since PSC now constructs the synthetic control using closer neighbours so that its bias becomes smaller.
NSC has the smallest bias among the estimators in the nonlinear cases, since it employs both the $L_1$ penalty terms weighted by pairwise matching discrepancies to select closer neighbours, and the $L_2$ regularisation to control the scale of the weights.

\section{Empirical Applications}\label{Ch1_sec_app}

In this section, we first revisit the two empirical applications in \cite{abadie2010synthetic} and \cite{abadie2015comparative} to illustrate the nonlinear synthetic control method, and compare it with the original synthetic control method.
The synthetic controls in both methods are constructed by matching only on the outcomes in all pretreatment periods, since the observed predictors are not essential as long as there is a good fit on the pretreatment outcomes over an extended period of time \citep{botosaru2019role}.
We then move on to the main empirical application of this paper, where we estimate the impact of the 2019 anti-extradition law amendments bill protests in Hong Kong on the city's economy using the nonlinear synthetic control method.

\subsection{California's Tobacco Control Program}

In the first example, we revisit the empirical application in \cite{abadie2010synthetic}, who examine the effect of a large-scale tobacco control program implemented in California in 1988 on the annual per-capita cigarette sales. 

\begin{center}
	\resizebox{12cm}{!}{
		\begin{threeparttable}
			\centering
			\caption{Comparison of Synthetic Control Weights}
			\label{Ch1_tab_cal}
			\begin{tabular}{lp{1.5cm}p{1.5cm}lp{1.5cm}p{1.5cm}}
				\toprule
				State       & OSC Weight & NSC Weight & State          & OSC Weight & NSC Weight \\
				\hline
				Alabama     & 0          & -0.015     & Nevada         & 0.186      & 0.091      \\
				Arkansas    & 0          & -0.057     & New Hampshire  & 0.049      & 0          \\
				Colorado    & 0.03       & 0.119      & New Mexico     & 0          & 0.103      \\
				Connecticut & 0.08       & 0.112      & North Carolina & 0          & 0          \\
				Delaware    & 0          & 0          & North Dakota   & 0          & 0          \\
				Georgia     & 0          & 0          & Ohio           & 0          & 0          \\
				Idaho       & 0          & 0.183      & Oklahoma       & 0          & 0          \\
				Illinois    & 0          & 0.02       & Pennsylvania   & 0          & 0          \\
				Indiana     & 0          & 0          & Rhode Island   & 0          & 0          \\
				Iowa        & 0          & 0.039      & South Carolina & 0          & -0.003     \\
				Kansas      & 0          & 0          & South Dakota   & 0          & 0          \\
				Kentucky    & 0          & 0          & Tennessee      & 0          & -0.071     \\
				Louisiana   & 0          & 0          & Texas          & 0          & 0          \\
				Maine       & 0          & 0          & Utah           & 0.385      & 0.045      \\
				Minnesota   & 0          & 0.027      & Vermont        & 0          & 0          \\
				Mississippi & 0          & -0.007     & Virginia       & 0          & 0          \\
				Missouri    & 0          & 0          & West Virginia  & 0          & 0.083      \\
				Montana     & 0.271      & 0.176      & Wisconsin      & 0          & 0.06       \\
				Nebraska    & 0          & 0.094      & Wyoming        & 0          & 0          \\
				\bottomrule
			\end{tabular}
		\end{threeparttable}
	}
\end{center}

\medskip

Table \ref{Ch1_tab_cal} compares the weights assigned to the other states by the original synthetic control method and the nonlinear synthetic control method ($a^*=0.3$, $b^*=0.7$), respectively.
We see that the weights assigned by the original synthetic control method are concentrated on Utah (0.385), Montana (0.271) and Nevada (0.186), with relatively minor weights on Connecticut, New Hampshire and Colorado. In comparison, the weights in the nonlinear synthetic control method spread out among more states, but are still sparse.

\begin{figure}[!htb]
	\centering
	\begin{subfigure}{.48\textwidth}
		\centering
		\includegraphics[width=\linewidth]{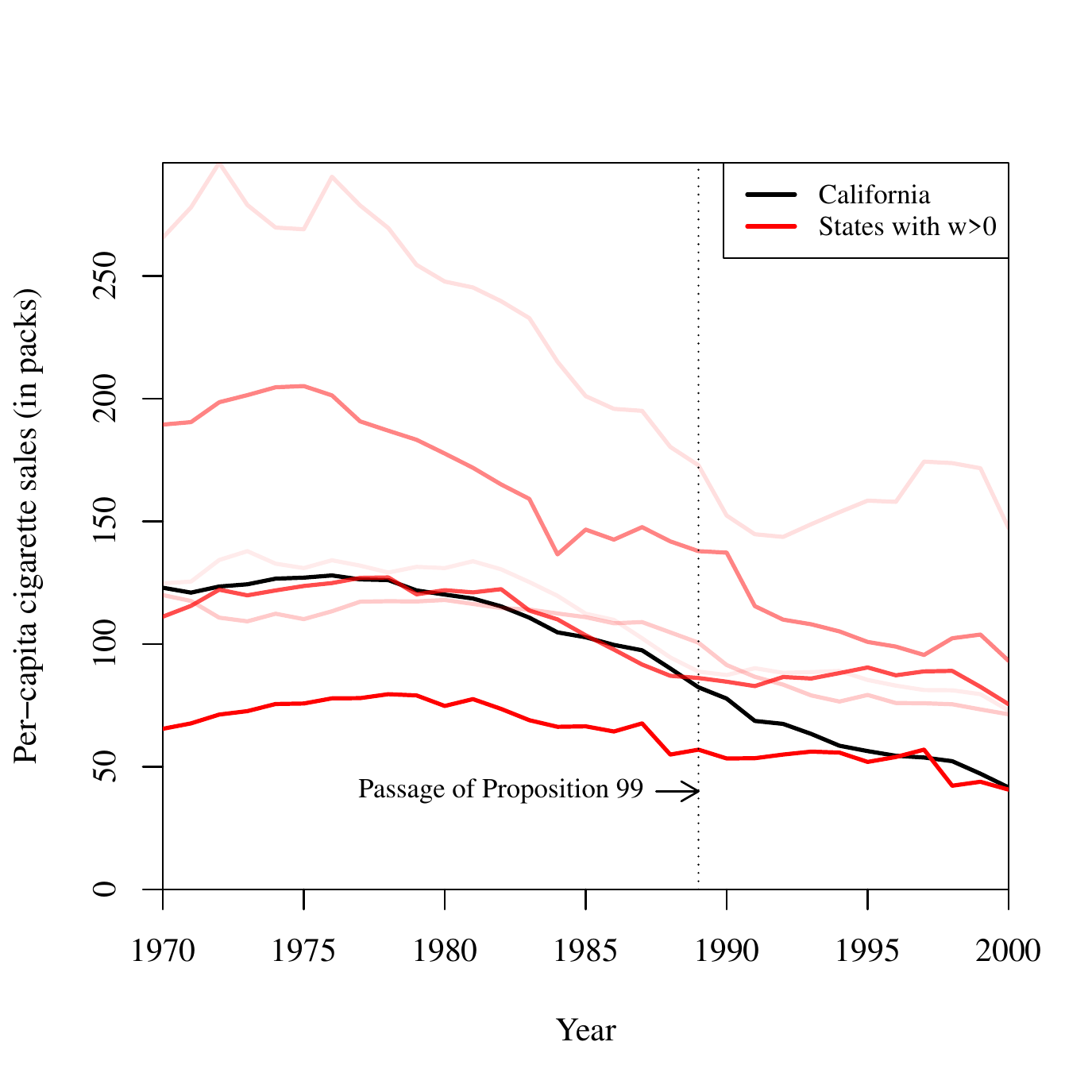}
		\caption{Original SC Method}
		\label{Ch1_fig_cal_com1}
	\end{subfigure}
	~
	\begin{subfigure}{.48\textwidth}
		\centering
		\includegraphics[width=\linewidth]{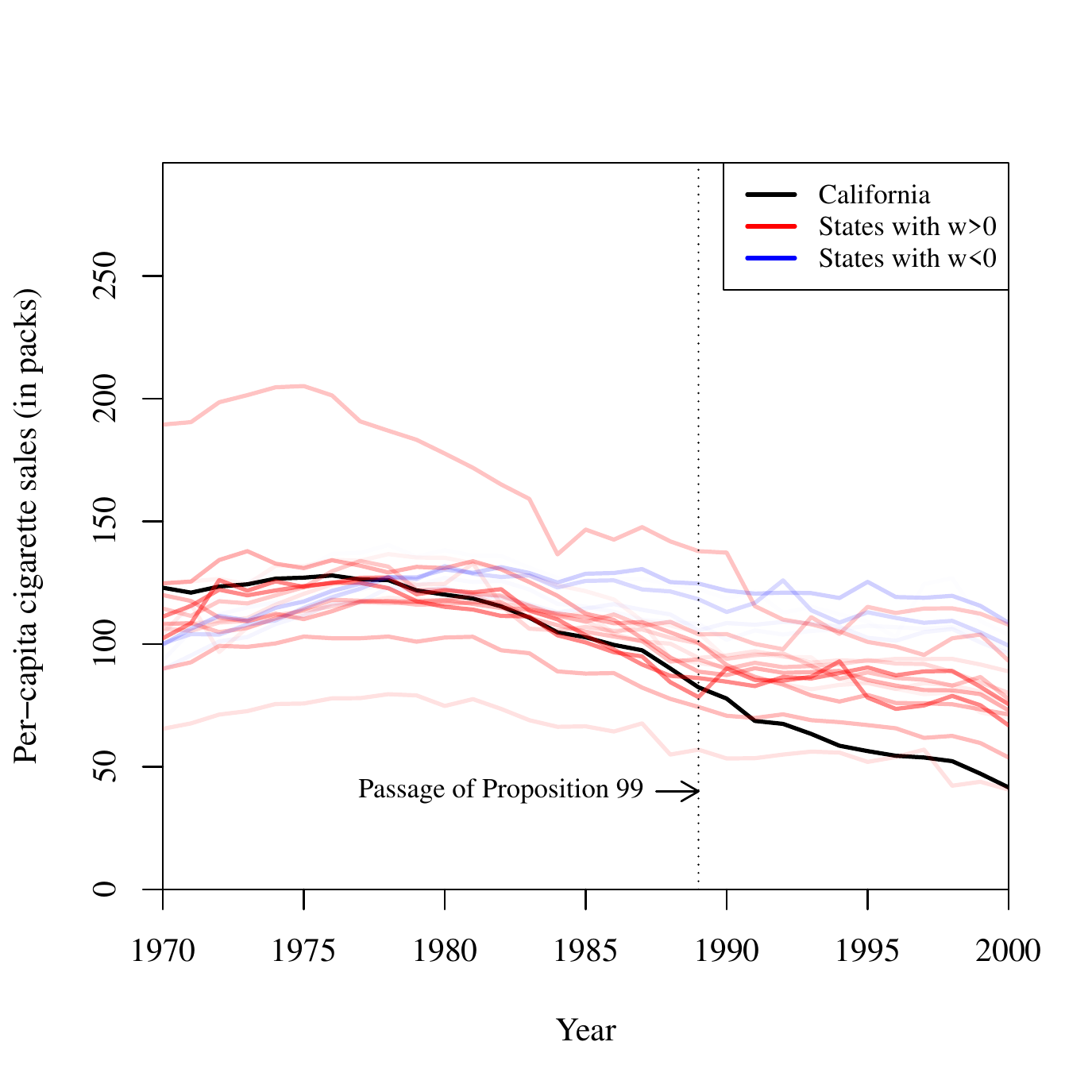}
		\caption{Nonlinear SC Method}
		\label{Ch1_fig_cal_com2}
	\end{subfigure}
	\caption{States Used for Constructing the Synthetic California}
	\label{Ch1_fig_cal_com}
\end{figure}

Figure \ref{Ch1_fig_cal_com} displays the trajectories of the per-capita cigarette sales from 1970 to 2000 for California (black), the states that are assigned positive weights (red), and the states that are assigned negative weights (blue) in the two methods. For comparison, the depth of the colour for the trajectory is scaled by the magnitude of the weight assigned to the state. We see that although the weights in the original synthetic control method are sparse, they may be assigned to control units that are far away from the treated unit, as long as the constructed synthetic control approximates the treated unit well and the weights stay non-negative. 
In comparison, the nonlinear synthetic control method constructs the synthetic control using control units that are close to the treated unit in the presence of nonlinearity.

\begin{figure}[!htb]
	\centering
	\begin{subfigure}{.48\textwidth}
		\centering
		\includegraphics[width=\linewidth]{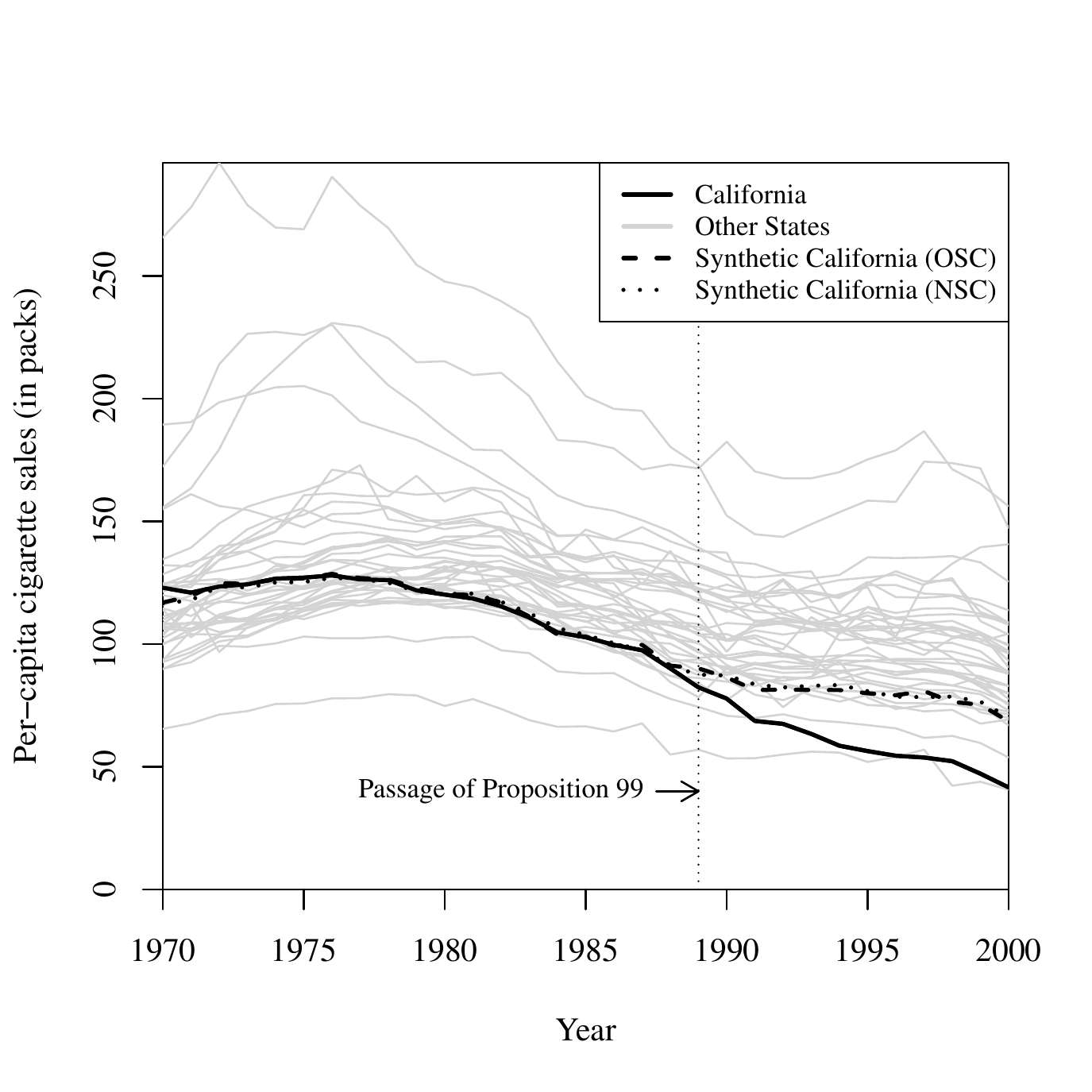}
		\caption{Trajectories of the Outcome}
		\label{Ch1_fig_cal1}
	\end{subfigure}
	~
	\begin{subfigure}{.48\textwidth}
		\centering
		\includegraphics[width=\linewidth]{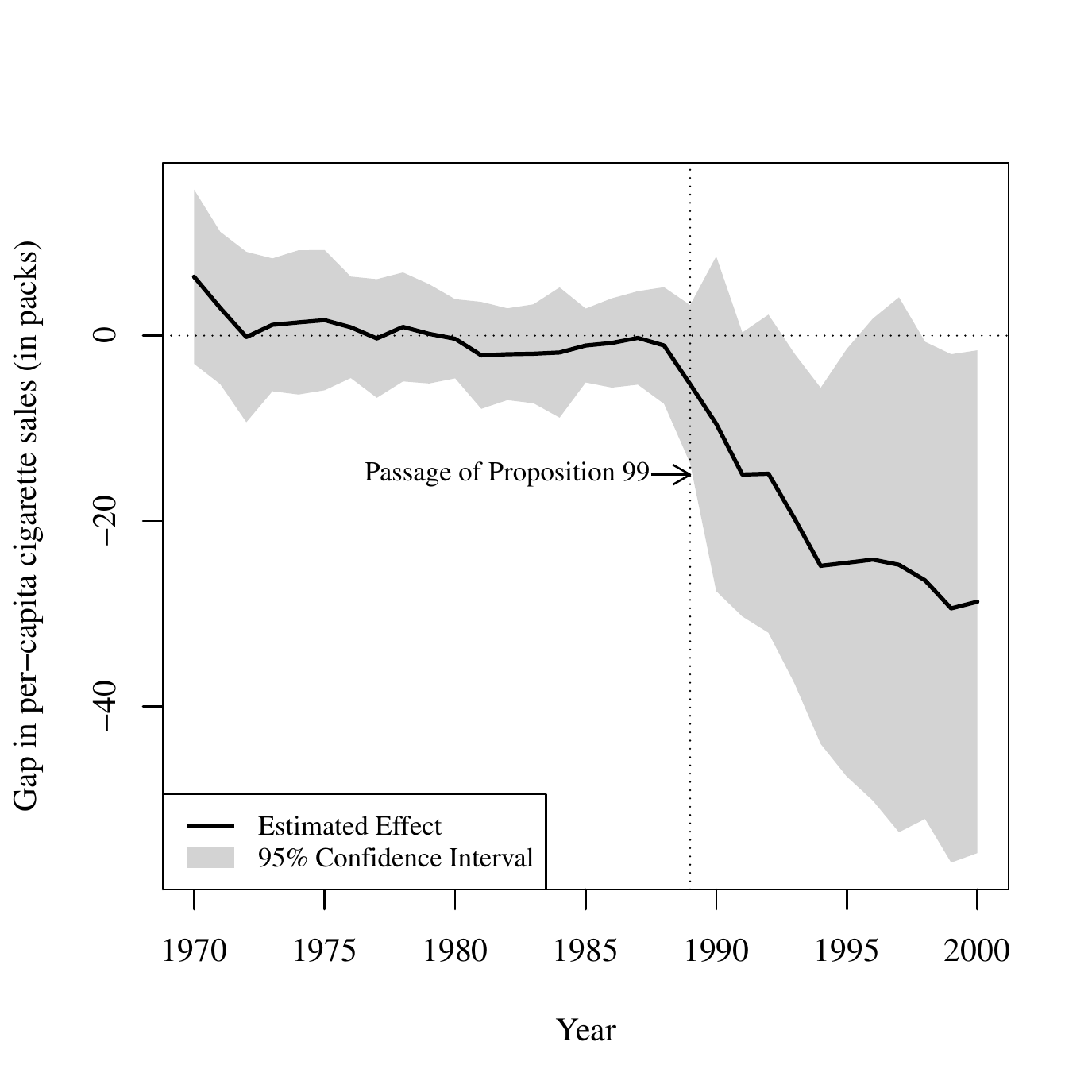}
		\caption{Estimated Treatment Effect}
		\label{Ch1_fig_cal2}
	\end{subfigure}
	\caption{Revisiting the Example of the California Tobacco Law Program}
	\label{Ch1_fig_cal}
\end{figure}

Figure \ref{Ch1_fig_cal1} depicts the trajectories of the per-capita cigarette sales for California (black solid line), the other states (gray solid line), the synthetic California constructed using the original synthetic control method (black dashed line), and the synthetic California constructed using the nonlinear synthetic control method (black dotted line).
We see that the trajectories for the synthetic California from both methods are quite similar, both of which closely follow the trajectory for the real California before 1988, and diverge after the passage of the tobacco control legislation in 1988.
Figure \ref{Ch1_fig_cal2} shows the gap between the trajectories for California and the synthetic California in the nonlinear synthetic control method, which is used to estimate the effect of the tobacco control program on cigarette sales. The result suggests that the tobacco control program reduced per-capita cigarette sales by 9.5 packs in 1990, 24.5 packs in 1995 and 28.7 packs in 2000. The confidence intervals imply that the effect became significant from 1993 onwards (except in 1996 and 1997).

\subsection{1990 German Reunification}

In the second example, we revisit the empirical application in \cite{abadie2015comparative}, which analyse the effect of the 1990 German reunification on West Germany's per-capita GDP.

\begin{center}
	\resizebox{11cm}{!}{
		\begin{threeparttable}
			\centering
			\caption{Comparison of Synthetic Control Weights}
			\label{Ch1_tab_ger}
			\begin{tabular}{lp{1.5cm}p{1.5cm}lp{1.5cm}p{1.5cm}}
				\toprule
				Country   & OSC Weight & NSC Weight & Country     & OSC Weight & NSC Weight \\
				\hline
				Australia & 0          & 0.027      & Netherlands & 0.091      & 0.087      \\
				Austria   & 0.325      & 0.134      & New Zealand & 0          & -0.017     \\
				Belgium   & 0          & 0.101      & Norway      & 0.062      & 0.123      \\
				Denmark   & 0          & 0.058      & Portugal    & 0          & -0.034     \\
				France    & 0          & 0.092      & Spain       & 0          & -0.037     \\
				Greece    & 0.008      & 0.003      & Switzerland & 0.082      & 0.106      \\
				Italy     & 0.062      & 0.096      & UK          & 0.072      & 0.079      \\
				Japan     & 0          & 0.016      & USA         & 0.299      & 0.168      \\
				\bottomrule
			\end{tabular}
		\end{threeparttable}
	}
\end{center}

\medskip

Table \ref{Ch1_tab_ger} compares the weights on the control countries in the original method and the nonlinear synthetic control method ($a^*=0$, $b^*=0.7$).
The weights assigned by the original synthetic control method are concentrated on Austria, USA, the Netherlands, Switzerland, UK, Norway, Italy and Greece, with the weights in descending order. In comparison, the weights assigned by the nonlinear synthetic control method spread out among all countries, with positive weights on countries that are close to West Germany in terms of the outcomes, and negative weights on countries that are farther away.
This can also be seen from Figure \ref{Ch1_fig_ger_com}, which depicts the trajectories of GDP per capita for countries used for constructing the synthetic West Germany in the two methods.

\begin{figure}[!htb]
	\centering
	\begin{subfigure}{.48\textwidth}
		\centering
		\includegraphics[width=\linewidth]{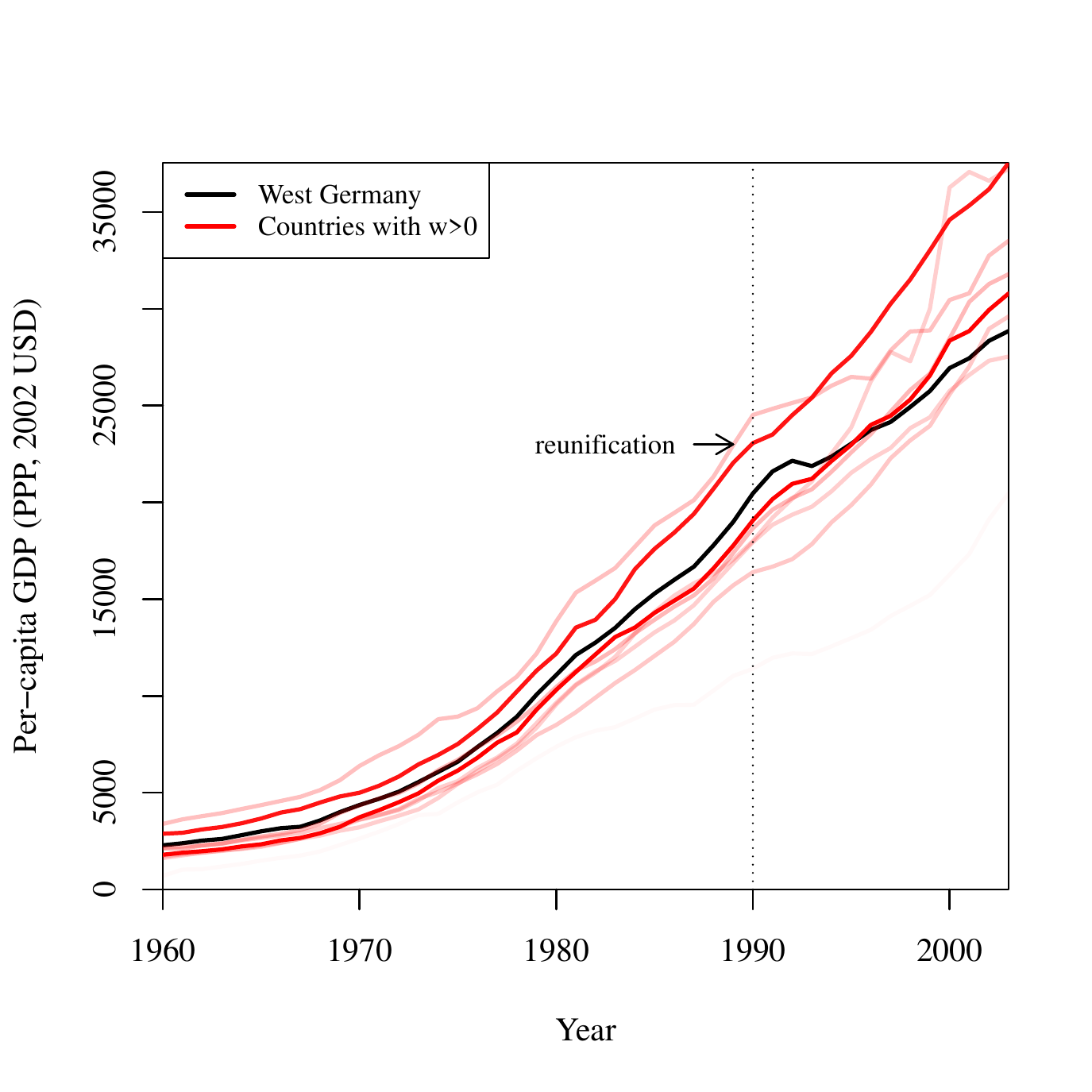}
		\caption{Original SC Method}
		\label{Ch1_fig_ger_com1}
	\end{subfigure}
	~
	\begin{subfigure}{.48\textwidth}
		\centering
		\includegraphics[width=\linewidth]{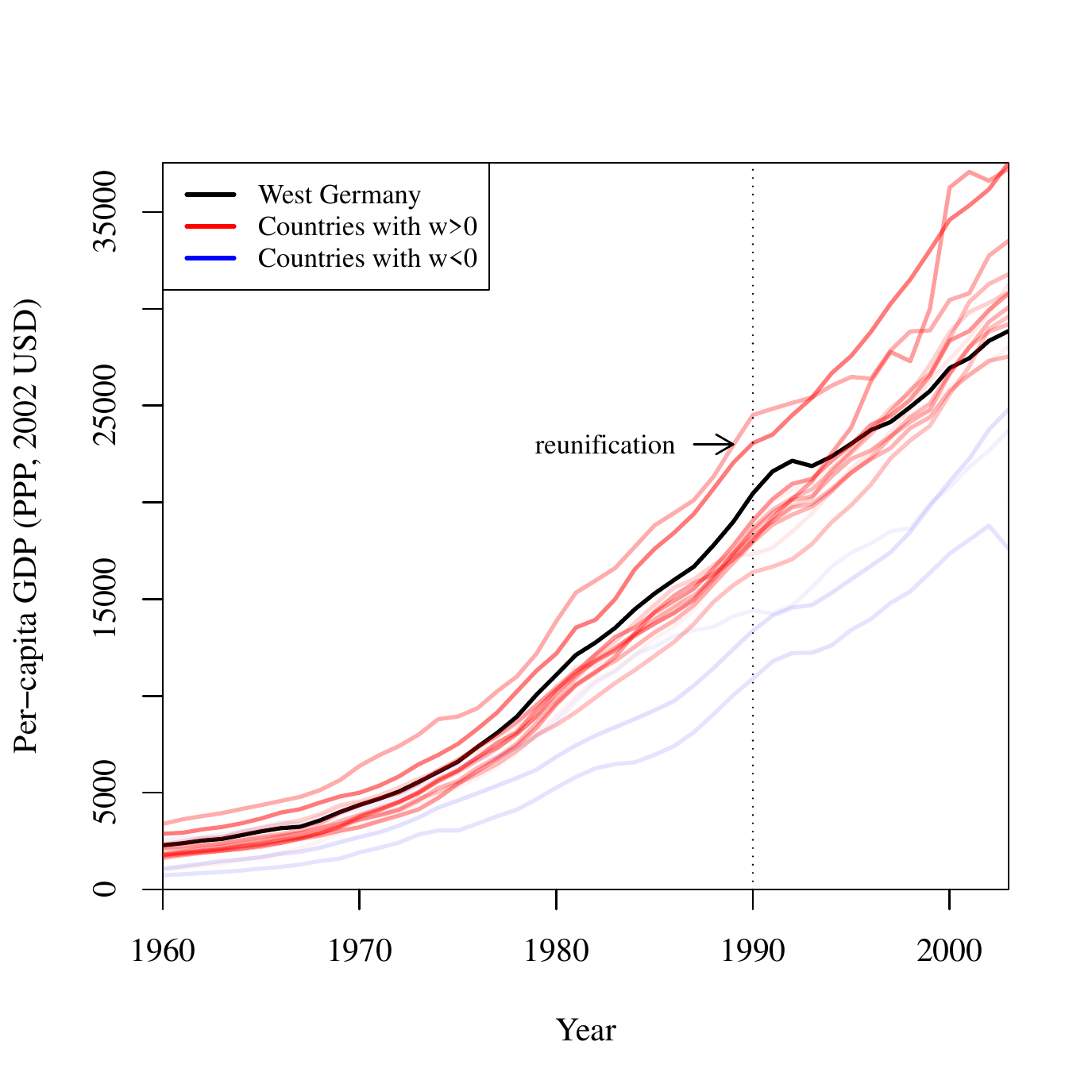}
		\caption{Nonlinear SC Method}
		\label{Ch1_fig_ger_com2}
	\end{subfigure}
	\caption{Countries Used for Constructing the Synthetic West Germany}
	\label{Ch1_fig_ger_com}
\end{figure}

\begin{figure}[!htb]
	\centering
	\begin{subfigure}{.48\textwidth}
		\centering
		\includegraphics[width=\linewidth]{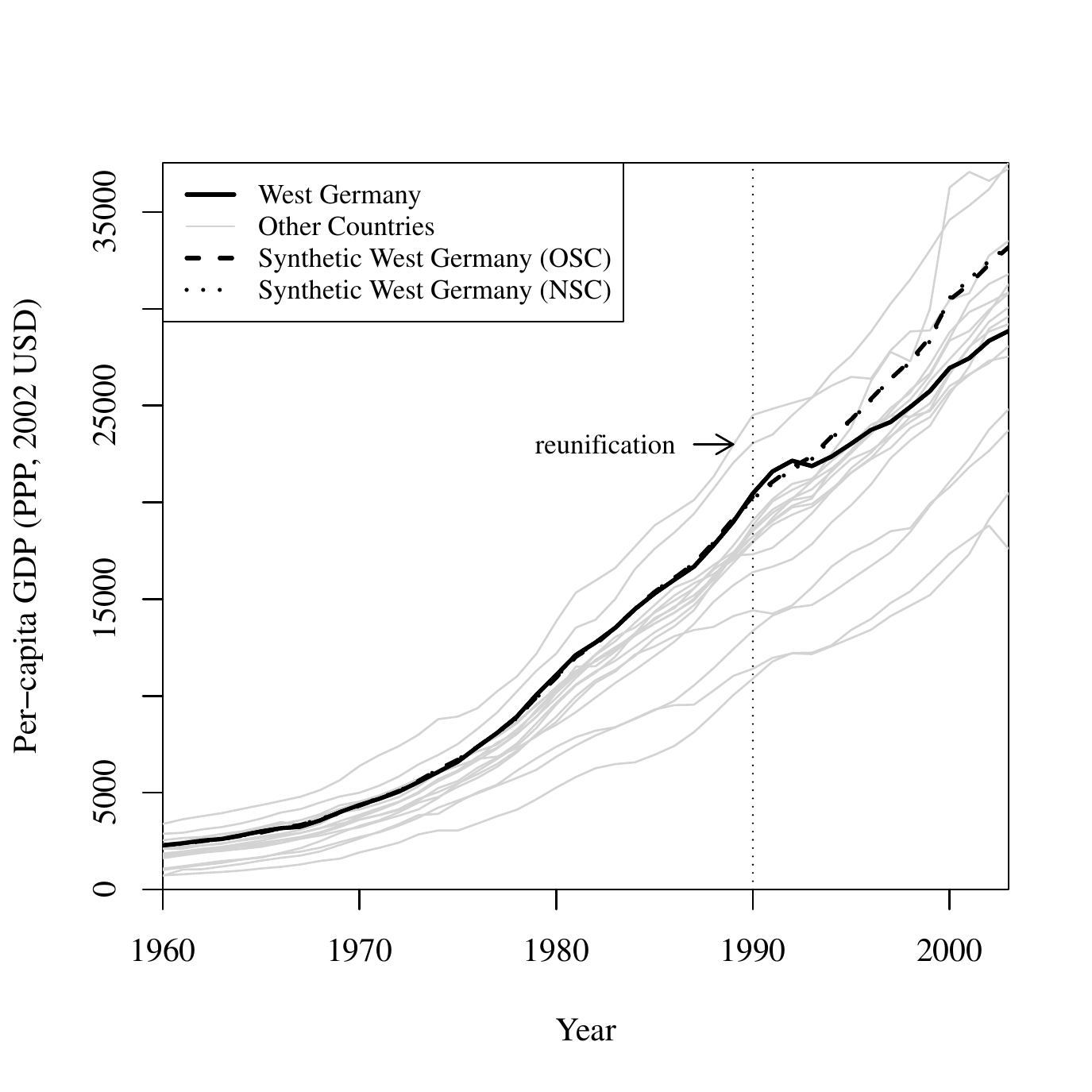}
		\caption{Trajectories of the Outcome}
		\label{Ch1_fig_ger1}
	\end{subfigure}
	~
	\begin{subfigure}{.48\textwidth}
		\centering
		\includegraphics[width=\linewidth]{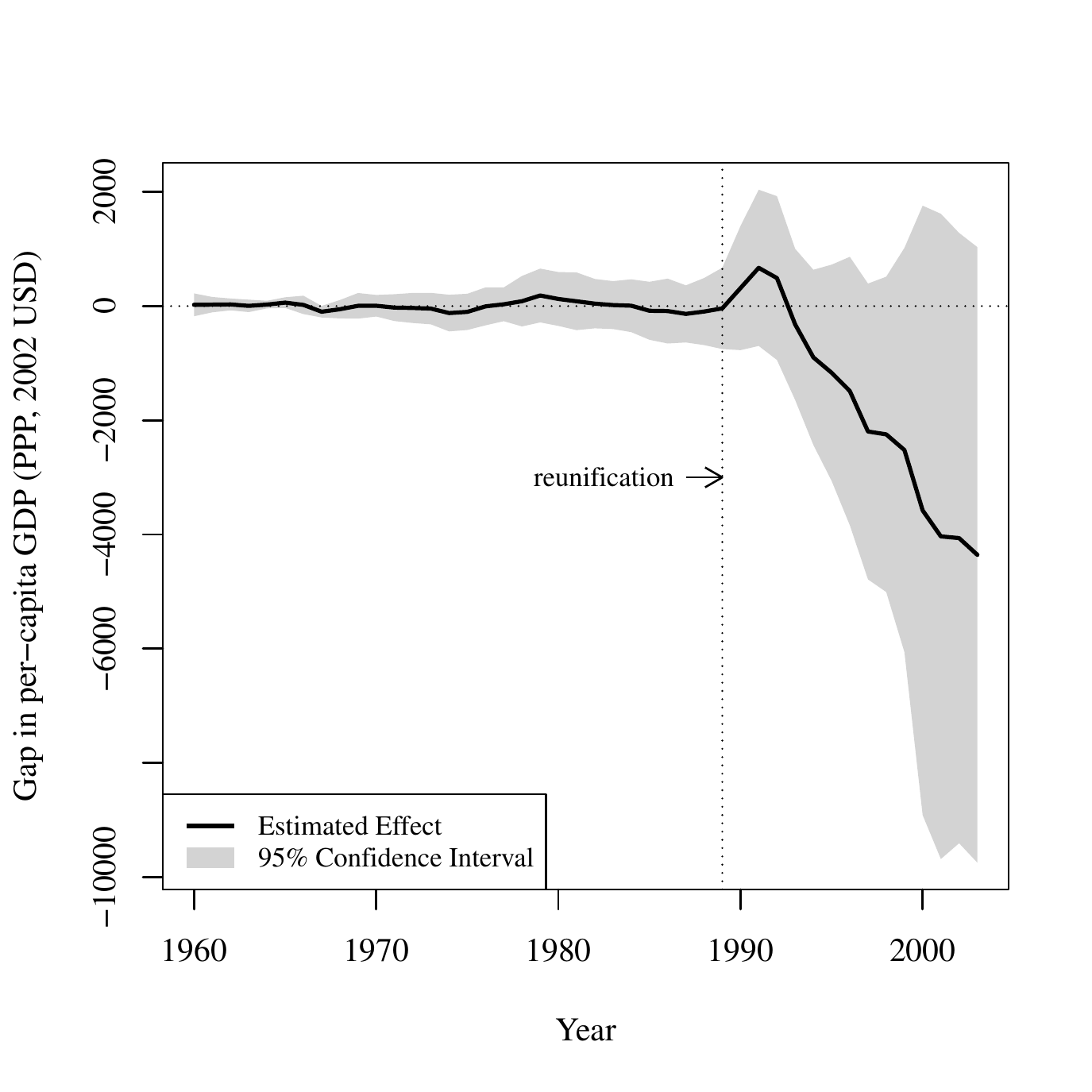}
		\caption{Estimated Effect}
		\label{Ch1_fig_ger2}
	\end{subfigure}
	\caption{Revisiting the Example of German Reunification}
	\label{Ch1_fig_ger}
\end{figure}

Figure \ref{Ch1_fig_ger1} depicts the trajectories of GDP per capita for West Germany (black solid line), the other countries (gray solid lines), the synthetic West Germany constructed using the original synthetic control method (black dashed line), and the synthetic West Germany constructed using the nonlinear synthetic control method (black dotted line). Despite differences in the weights, the trajectories for the two synthetic West Germanies are virtually the same, both of which follow the trajectory for the real West Germany closely before the German reunification in 1990, and diverge afterwards.
The gap between the trajectories for West Germany and the synthetic West Germany in the nonlinear synthetic control method indicates that the reunification reduced the per-capita GDP in West Germany by 1166 USD in 1995, 2520 USD in 1999, and 4356 USD in 2003. 
However, we fail to reject the null hypothesis that German reunification had no effect on West Germany's economy in any particular period.

\subsection{The Economic Impact of the 2019 Hong Kong Protests}

\subsubsection{Background}

In 2018, a young couple from Hong Kong, the 20-year-old woman, Poon Hiu-wing, and her 19-year-old boyfriend, Chan Tong-kai, travelled to Taiwan as tourists. Following a quarrel in the hotel room, Chan strangled Poon, took her valuables and fled back to Hong Kong \citep{bbc2019}.
Since the murder took place in Taiwan where the Hong Kong authorities had no jurisdiction, they could only charge Chan with money laundering but not homicide. Nor could they surrender Chan to Taiwan, as there was no extradition treaty or one-off surrender agreement between Hong Kong and Taiwan.\footnote{The term `surrender' is used formally in place of `extradition' to reflect that Hong Kong is part of China and does not have sovereign status.}

To fill this legal loophole, the Hong Kong government proposed amendments to the existing extradition law (formally, the Fugitive Offenders Ordinance, Cap. 503) in February 2019, to allow case-based surrenders of fugitive offenders to jurisdictions apart from the twenty with which the city already had extradition treaties \citep{legco2019}.\footnote{Amendments were also proposed to the Mutual Legal Assistance in Criminal Matters Ordinance, Cap. 525, which were less controversial.}\textsuperscript{,}\footnote{Some do not consider the limitation that excludes the rest of China from both ordinances a loophole, but rather a deliberate restriction to protect Hong Kong's legal system \citep{HKBA2019}. However, since Hong Kong does not have the authority to enter into an extradition treaty or one-off surrender agreement only with Taiwan but not the other parts of China, this allows fugitives as Chan in the Taiwan homicide case to evade prosecution, and thus is effectively a loophole in this regard.
Some lawmakers also advocated adding a sunset clause to the amendments, which would see the amendments expire after the resolution of the Taiwan case \citep{standard201904}. This was rejected by the government, who reiterated that the purpose of the amendments was not only to resolve the Taiwan case, but also to improve the existing arrangement for the surrender of fugitive offenders \citep{HKSARG201904}.}
While the proposed amendments would enable Hong Kong to surrender Chan to Taiwan, the inclusion of mainland China raised concerns among residents from different walks of life in the city that civil liberties would be infringed upon, given previous incidents of Hong Kong residents being abducted to the mainland for trial \citep{AP2016}. The open support of the amendments from several central government officials contributed to the rising anxieties \citep{Reuters2019}.

The amendments bill sparked a series of protests, which started as peaceful demonstrations in March and April, and eventually escalated into violence from June, when the government pushed for a speedy second reading of the bill to ensure its passage before the release of Chan from prison on money laundering charges.
Hundreds of thousands joined the protests, and clashes broke out between the protesters and the police, with radical protesters throwing bricks dug up from the pavement, iron bars disassembled from the roadside railings, and later petrol bombs at the police, who responded with pepper spray, tear gas, and rubber bullets \citep{SCMP201906,mingpao2019}.
Unlike the protests in previous years, the moderate protesters refused to split with the radical protesters this time, as most of them believed that ``peaceful assembly should combine with confrontational actions to maximise the impact of protests'', and that ``radical tactics were understandable when the government refuses to listen'', despite other peaceful avenues such as strikes \citep{Yuen2019}.

The suspension of the bill by the government on 15 June did not quiet, but rather boosted the morale of the protesters, who raised more demands including the full withdrawal of the bill, the retraction of the characterisation of the protests as ``riots'', the release and exoneration of the arrested protesters, the establishment of an independent commission of inquiry into police brutality, the resignation of Carrie Lam as chief executive, and the universal suffrage for the Legislative Council and the chief executive elections, pushing the protests to a non-resolvable end.
Led by pro-independence activists who had been at the forefront of the protests from the beginning, the protests also evolved from aiming against the amendments bill to be against China or the Chinese government, challenging the ``one country, two systems'' principle \citep{HKFP2019}.
The oftentimes violent protests persisted through the next few months, which saw the police headquarters besieged, the Legislative Council stormed, the Liaison Office of the Central People's Government attacked, the national emblem and flag of China desecrated, the international airport occupied, railway stations and shops vandalised, and several universities sieged by the protesters \citep{HKFP20190621,CNN201907,HKFP20190722,Huff2019,cnn201909,ABC201911,BBC201911,HKFP2021}. Up till 14 April 2020, 8001 protesters were arrested, among whom 41\% were students, with 60\% being university students and 40\% being secondary school students \citep{chinanews2020}.

On 30 June 2020, the Standing Committee of the National People's Congress of China passed the Hong Kong national security law, which was to be enacted by the city on its own, but which the city had failed to accomplish since its return to China in 1997.\footnote{Article 23 of Hong Kong's Basic Law: ``The Hong Kong Special Administrative Region shall enact laws on its own to prohibit any act of treason, secession, sedition, subversion against the Central People's Government, or theft of state secrets, to prohibit foreign political organizations or bodies from conducting political activities in the Region, and to prohibit political organizations or bodies of the Region from establishing ties with foreign political organizations or bodies.''\citep{basiclaw1997}}
Within hours, several pro-independence organisations announced the decision to disband and cease all operations \citep{aljazeera2020}.
Subsequently, a dozen Legislative Council candidates were disqualified and several pro-independence activists were arrested under the national security law, bringing the year-long unrest in the city to a halt.

\subsubsection{Data}

To estimate the economic impact of the 2019 anti-extradition law amendments bill protests in Hong Kong, we compare the quarterly GDP per capita of Hong Kong and the synthetic Hong Kong constructed using 48 other major economies listed in Table \ref{Ch1_tab_weights}. The quarterly GDP per capita is measured in chained (2015) U.S. dollars and is Purchasing Power Parity (PPP) and seasonally adjusted.
The treatment assignment period is the first quarter of 2019, during which the government proposed the amendments bill to the existing extradition law and the first round of protests was triggered.
The window of observation is from the first quarter of 2011 to the fourth quarter of 2020, which is chosen to provide enough pretreatment periods, and in the meantime, to avoid potential structural breaks, e.g., due to the 2008 global financial crisis, over longer periods of time \citep{abadie2021jel}.

The data for Hong Kong, Taiwan and Singapore are obtained from the respective government statistics websites. Purchasing power parities and the data for the other economies are obtained from OECD Stat, among which the seasonally adjusted GDP series for China is not available and is computed using the growth rate of the seasonally adjusted GDP. And the population for all economies is obtained from the United Nations. More detailed data sources are provided in Appendix \ref{Ch1_sec_data}.

\begin{figure}[!htb]
	\centering
	\includegraphics[width=.5\linewidth]{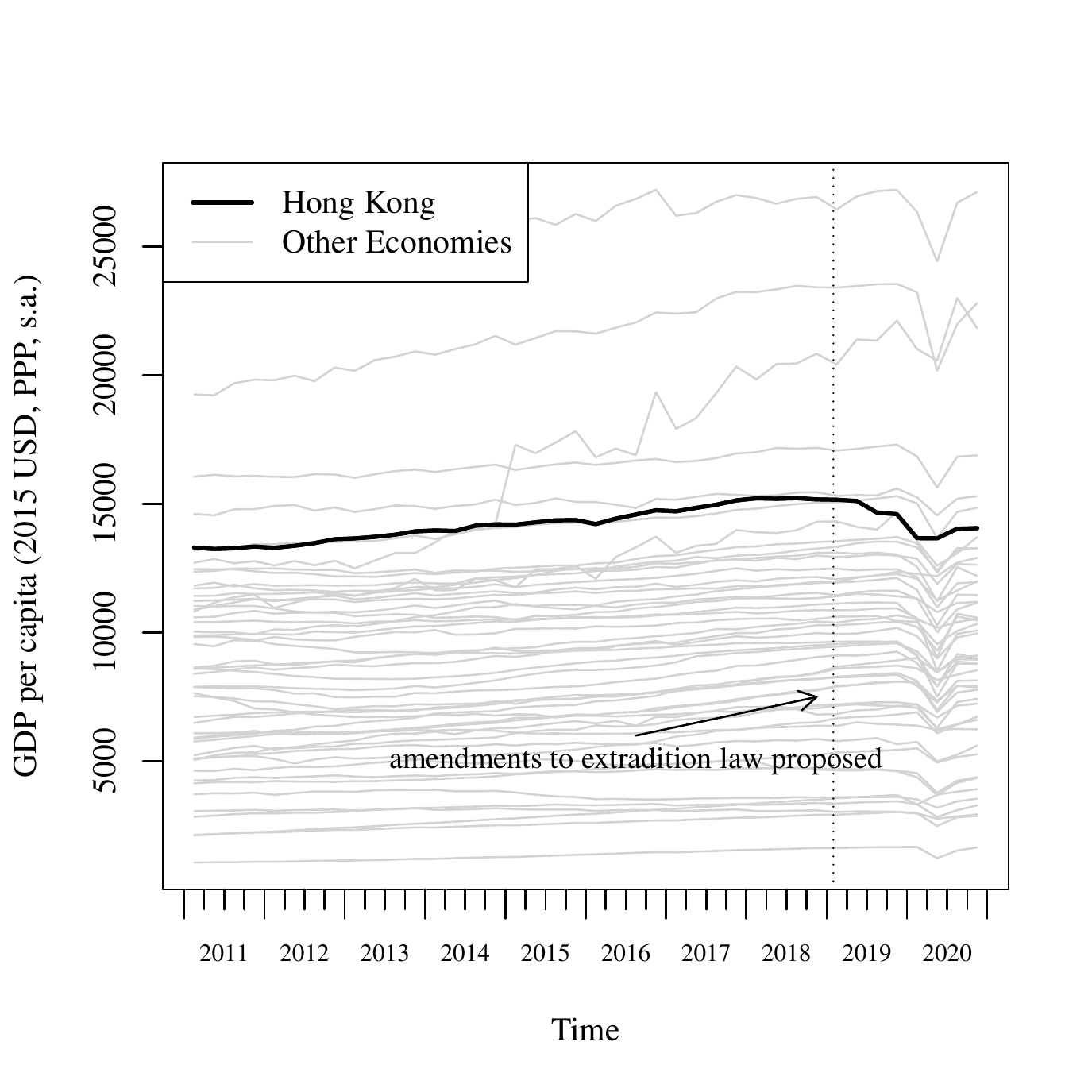}
	\caption{Trajectories of GDP per capita}
	\label{Ch1_fig_SC_des}
\end{figure}

Figure \ref{Ch1_fig_SC_des} visualises the trajectories of the quarterly GDP per capita for Hong Kong (black line) and the other economies in the sample (gray lines). We see that Hong Kong had one of the highest GDP per capita in the sample, which would be better approximated by the other economies without the non-negativity restriction. Most economies in the sample enjoyed steady growth in GDP per capita from 2011 until the first or second quarter of 2020, when almost all economies were severely hit by the COVID-19 pandemic. Most economies then had strong rebounds in the third quarter. In contrast, the decline of the GDP per capita in Hong Kong started from early 2019, coinciding with the onset of the protests, and persisted through the outbreak of the pandemic, while the recovery from the third quarter of 2020 was only mild.

This preliminary comparison clearly points to a detrimental effect of the protests on Hong Kong's economy. To estimate the effect more accurately, we construct a synthetic Hong Kong that closely tracks the GDP per capita of Hong Kong before the protests in 2019 using the other economies in the sample, which presumably is also close to Hong Kong in the underlying predictors and thus can be used to predict the counterfactual outcome of Hong Kong. We can then estimate the economic impact of the 2019 protests using the difference in GDP per capita between the synthetic Hong Kong and the real Hong Kong. The weights assigned to the other economies for constructing the synthetic Hong Kong are determined by the nonlinear synthetic control method to ensure a small bias of the estimator given the data.

\subsubsection{Results}

\begin{center}
	\resizebox{15cm}{!}{
		\begin{threeparttable}
			\centering
			\caption{Synthetic Control Weights}\label{Ch1_tab_weights}
			\begin{tabular}{lp{1.2cm}p{1.2cm}lp{1.2cm}p{1.2cm}lp{1.2cm}p{1.2cm}}
				\toprule
				Location         & OSC Weight & NSC Weight & Location    & OSC Weight & NSC Weight & Location       & OSC Weight & NSC Weight \\
				\hline
				Argentina & 0.12 & 0 & Greece & 0 & 0 & Norway & 0 & 0.01 \\ 
				Australia & 0 & 0 & Hungary & 0 & 0 & Poland & 0 & 0 \\ 
				Austria & 0 & 0 & Iceland & 0.03 & 0.06 & Portugal & 0 & 0 \\ 
				Belgium & 0 & 0 & India & 0 & 0 & Romania & 0 & 0 \\ 
				Brazil & 0 & 0 & Indonesia & 0 & 0 & Russia & 0 & 0 \\ 
				Bulgaria & 0 & 0 & Ireland & 0 & -0.01 & Singapore & 0.24 & 0.17 \\ 
				Canada & 0 & 0.01 & Israel & 0 & 0 & Slovakia & 0 & 0 \\ 
				Chile & 0 & 0 & Italy & 0 & 0 & Slovenia & 0 & -0.02 \\ 
				China (Mainland) & 0 & 0 & Japan & 0 & 0.07 & South Africa & 0 & 0 \\ 
				Colombia & 0 & 0 & South Korea & 0 & 0 & Spain & 0 & 0 \\ 
				Czechia & 0 & 0 & Latvia & 0 & 0 & Sweden & 0 & 0 \\ 
				Denmark & 0.02 & 0.05 & Lithuania & 0 & 0 & Switzerland & 0.19 & 0.15 \\ 
				Estonia & 0 & 0 & Luxembourg & 0.04 & 0.03 & Taiwan & 0.3 & 0.24 \\ 
				Finland & 0 & 0 & Mexico & 0 & 0 & Turkey & 0.06 & 0.07 \\ 
				France & 0 & 0 & Netherlands & 0 & 0 & United Kingdom & 0 & 0.02 \\ 
				Germany & 0 & 0.02 & New Zealand & 0 & 0 & United States & 0 & 0.13 \\ 
				\bottomrule
			\end{tabular}
		\end{threeparttable}
	}
\end{center}

\medskip

Table \ref{Ch1_tab_weights} displays the weights assigned to the other economies by the original method and the nonlinear synthetic control method ($a^*=0.6$ and $b^*=0.6$). 
The synthetic Hong Kong in the original method is constructed as a weighted average of Taiwan, Singapore, Switzerland, Argentina, Turkey, Luxembourg, Iceland and Denmark, with the weights in descending order. All the other economies receive zero weight. The weights in the nonlinear synthetic control method are similar to those in the original method, with the noticeable difference that instead of assigning significant weight to Argentina, the nonlinear synthetic control method assigns weight to the United States, which is more similar to Hong Kong in terms of the outcome. This is also reflected in Figure \ref{Ch1_fig_HK_com}. Although the weights in the original method are more sparse, the weights in the nonlinear synthetic control method are more concentrated on economies that are closer to Hong Kong.

\begin{figure}[!htb]
	\centering
	\begin{subfigure}{.48\textwidth}
		\centering
		\includegraphics[width=\linewidth]{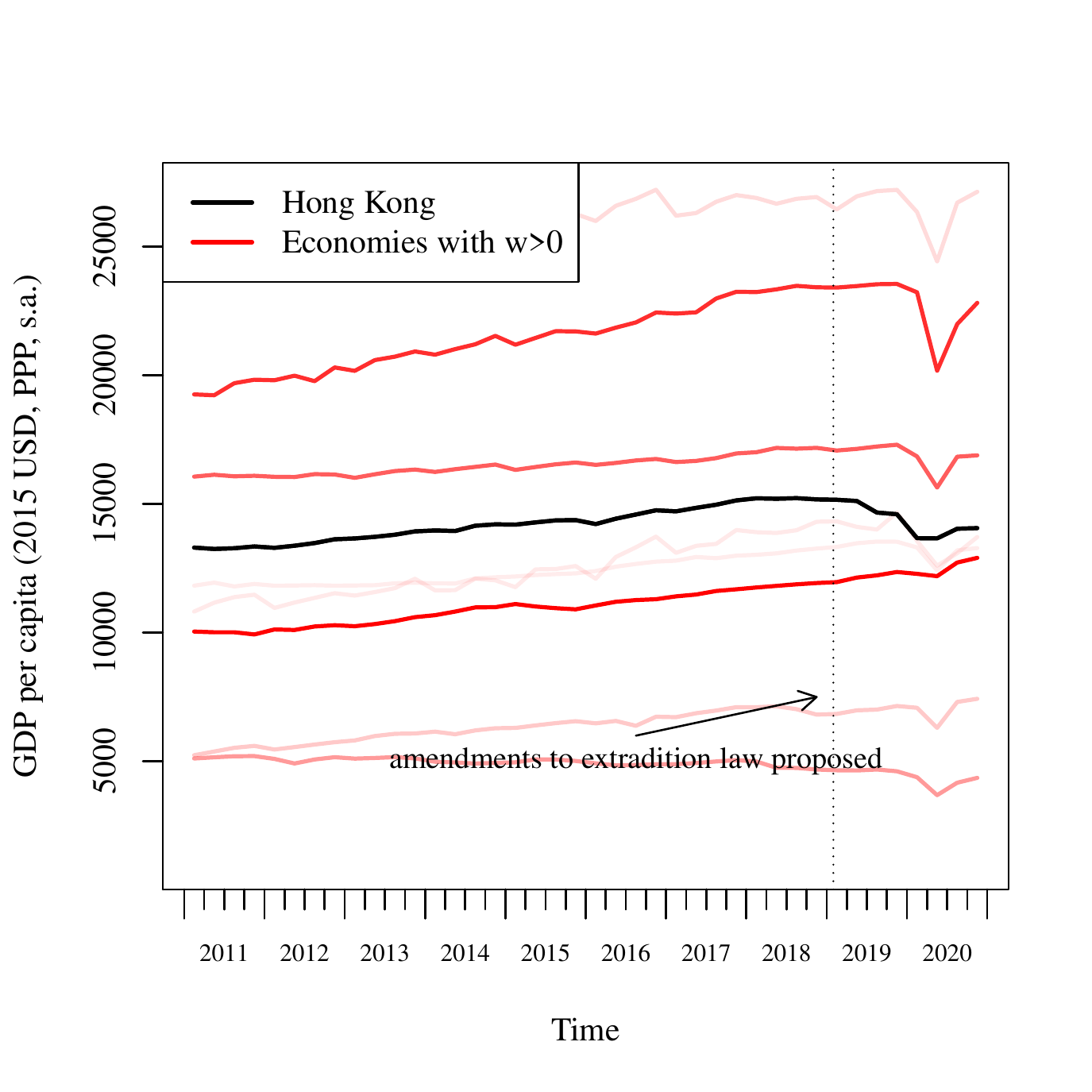}
		\caption{Original SC Method}
		\label{Ch1_fig_HK_com1}
	\end{subfigure}
	~
	\begin{subfigure}{.48\textwidth}
		\centering
		\includegraphics[width=\linewidth]{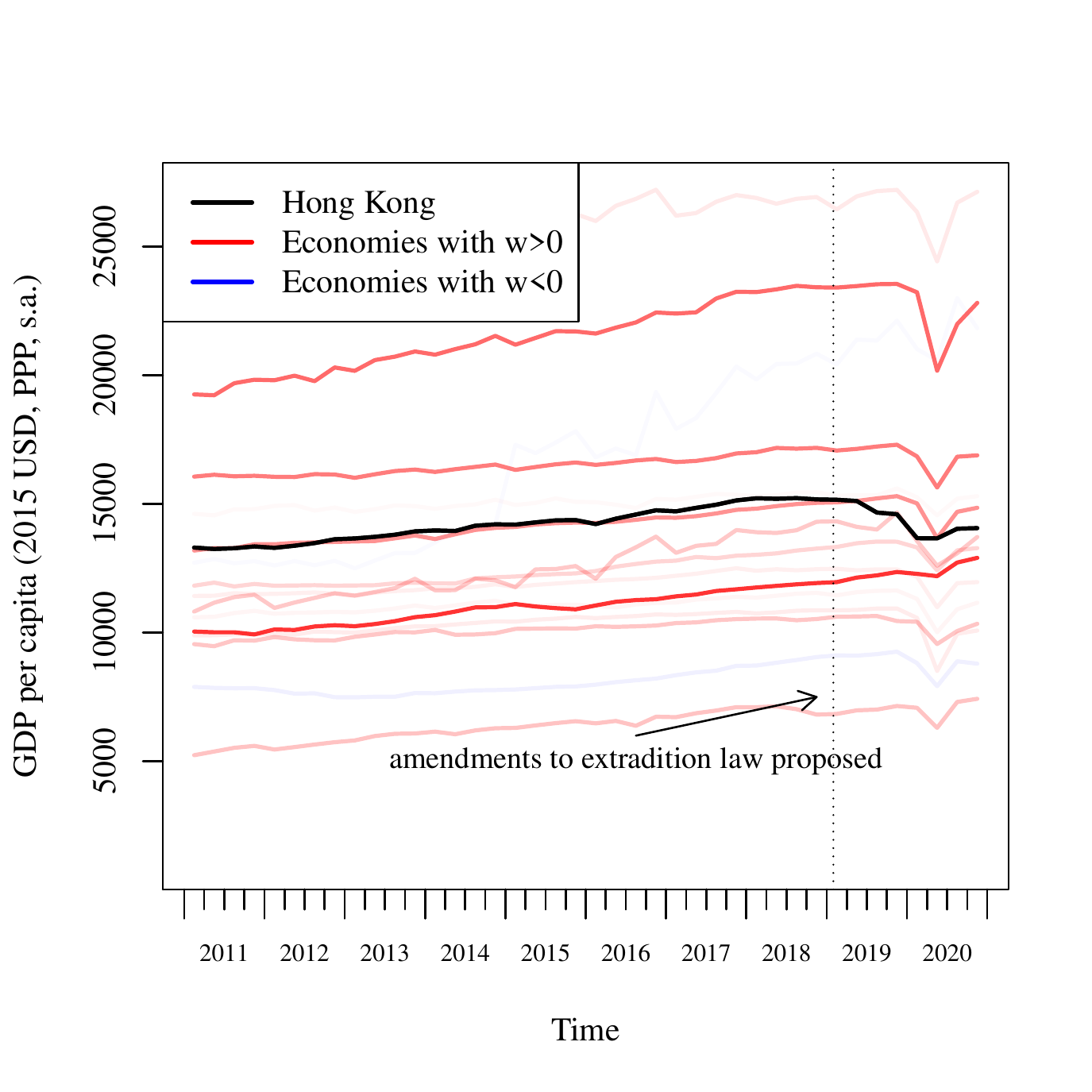}
		\caption{Nonlinear SC Method}
		\label{Ch1_fig_HK_com2}
	\end{subfigure}
	\caption{Economies Used for Constructing the Synthetic Hong Kong}
	\label{Ch1_fig_HK_com}
\end{figure}

\begin{figure}[!htb]
	\centering
	\begin{subfigure}{.48\textwidth}
		\centering
		\includegraphics[width=\linewidth]{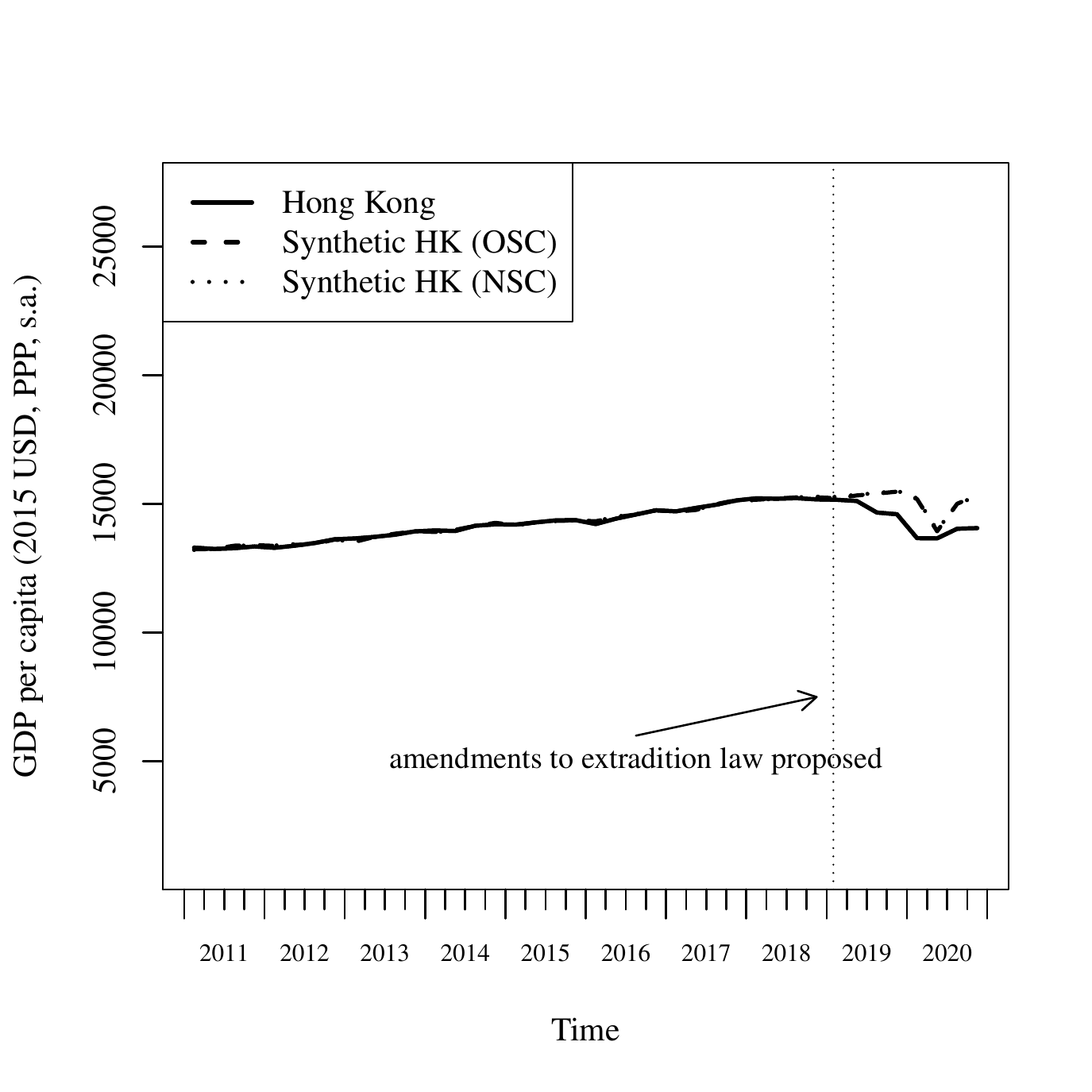}
		\caption{Trajectories for HK and synthetic HK}
		\label{Ch1_fig_HK_SC}
	\end{subfigure}
	~
	\begin{subfigure}{.48\textwidth}
		\centering
		\includegraphics[width=\linewidth]{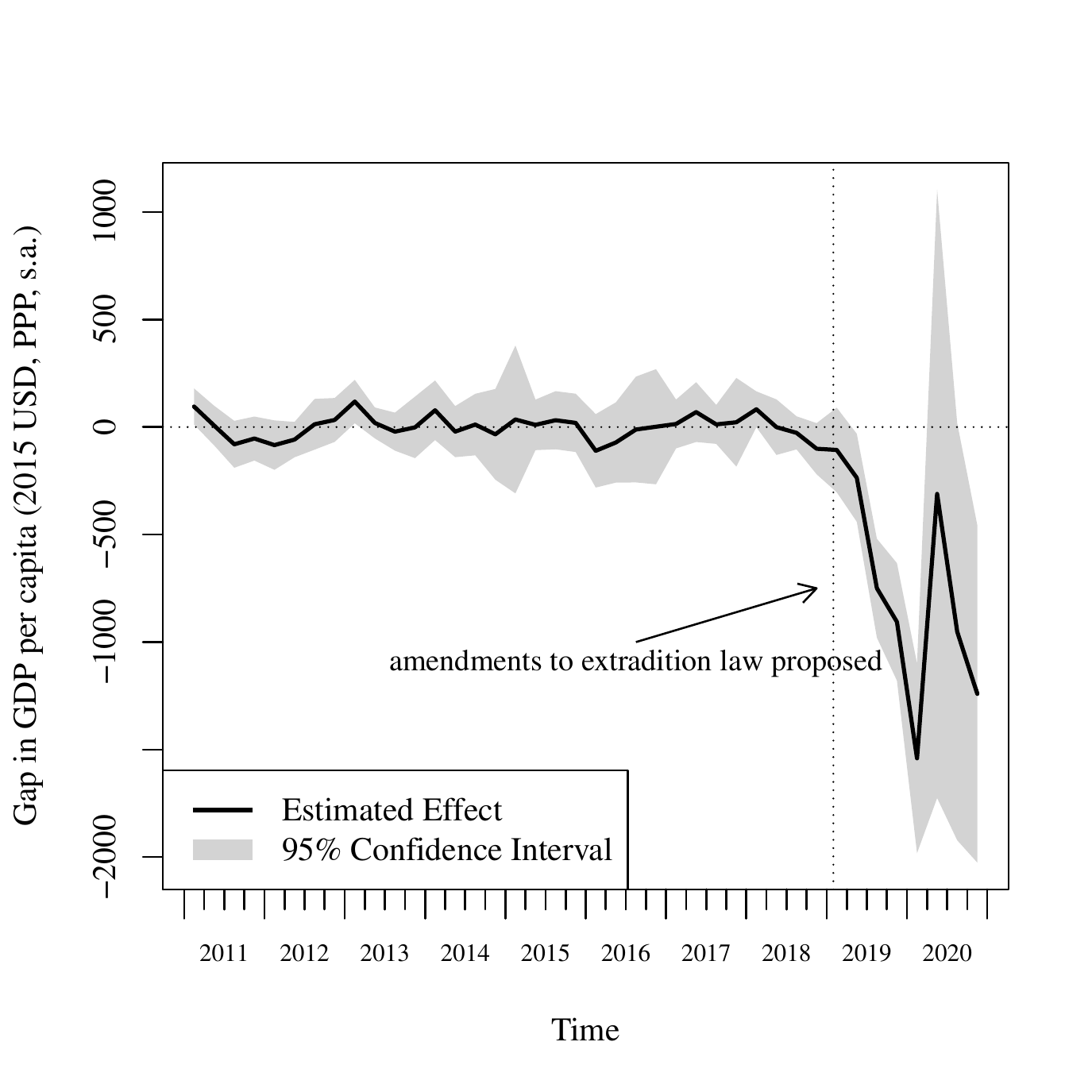}
		\caption{Estimated effect of protests}
		\label{Ch1_fig_HK_TE}
	\end{subfigure}
	\caption{Effect of Anti-Extradition Law Protests in Hong Kong}
	\label{Ch1_fig_HK}
\end{figure}

Figure \ref{Ch1_fig_HK_SC} displays the trajectories of quarterly GDP per capita for Hong Kong (black solid line), the synthetic Hong Kong constructed using the original synthetic control method (black dashed line), and the synthetic Hong Kong constructed using the nonlinear synthetic control method (black dotted line).
Despite the differences in the weights, the trajectories of the synthetic Hong Kong using the two methods are very similar and track the trajectory of Hong Kong very closely before the proposal of the amendments bill in the first quarter of 2019, and begin to diverge immediately afterwards.
This also indicates that the results are not sensitive to the choice of the tuning parameters.
Figure \ref{Ch1_fig_HK_TE} depicts the gap between the trajectories for Hong Kong and the synthetic Hong Kong constructed using the nonlinear synthetic control method, which is used to estimate the economic impact of the anti-extradition law amendments bill protests, as well as the 95\% confidence intervals.\footnote{Note that the confidence intervals in 2019 seem narrow only due to the slope. The average width of the confidence intervals is 274 in the pretreatment periods, and 452 in 2019.} The results suggest that the protests had a negative impact on Hong Kong's economy from the second quarter of 2019. The magnitude of the impact grew rapidly and reached its peak in the first quarter of 2020, when the GDP per capita in Hong Kong was 1540.76 USD or 11.27\% lower than what it would be if there were no protests. To put it into perspective, this magnitude exceeds the peak-to-trough decline in quarterly GDP per capita in Hong Kong during the previous two financial crises, which was 11.14\% in the 1997 Asian financial crisis, and 8.08\% in the 2008 global financial crisis, calculated using the same data series.
The impact was no longer significant in the second and third quarters of 2020, but became significant again in the fourth quarter, with the quarterly GDP per capita 8.8\% lower than its counterfactual level due to the slow recovery of the economy in Hong Kong. 

Note that the above results rely on the assumption that there is no spill-over effect. If the protests in Hong Kong benefited competing economies such as Taiwan and Singapore by driving capital and labour to those economies, then the economic impact of the protests would be overestimated. On the other hand, if the protests in Hong Kong had negative spill-over effects on the economies that receive nonnegative weights, e.g., by damaging the economic cooperation, then the impact of the protests would be underestimated.
Our results may also be confounded by the COVID-19 pandemic, which is a separate treatment from the protests. Although this treatment affected all economies, the magnitudes of the effects may be different. For example, there was virtually no further decline in the battered economy of Hong Kong in 2020, while the COVID-19 pandemic devastated almost all the other economies. The pandemic may even have benefitted Hong Kong's economy by restricting the protests. Thus, the results in 2020 should be interpreted with this in mind.

\subsubsection{Robustness Checks}

The synthetic Hong Kong is constructed by matching on the quarterly GDP per capita of Hong Kong from 2011 to 2019. Although unlikely, people might have anticipated a proposal to amend the extradition law and the turbulence that might follow, after the Taiwan homicide case took place in 2018.
To get rid of the potential anticipatory effect, we backdate the treatment to the first quarter of 2017, two years before the real treatment took place and one year before the Taiwan homicide case that triggered the government proposal of the amendments bill, and construct the synthetic Hong Kong by matching on the quarterly GDP per capita before 2017, to see if the results are sensitive to the choice of the treatment date.
This exercise also allows us to examine the ability of the synthetic Hong Kong to replicate the quarterly GDP per capita of Hong Kong in the absence of the treatment. If we were to find a large gap between the trajectories of Hong Kong and the synthetic Hong Kong after the placebo treatment in 2017 and before the real treatment in 2019, then it would undermine the credibility of the previous results.

\begin{figure}[!htb]
	\centering
	\begin{subfigure}{.48\textwidth}
		\centering
		\includegraphics[width=\linewidth]{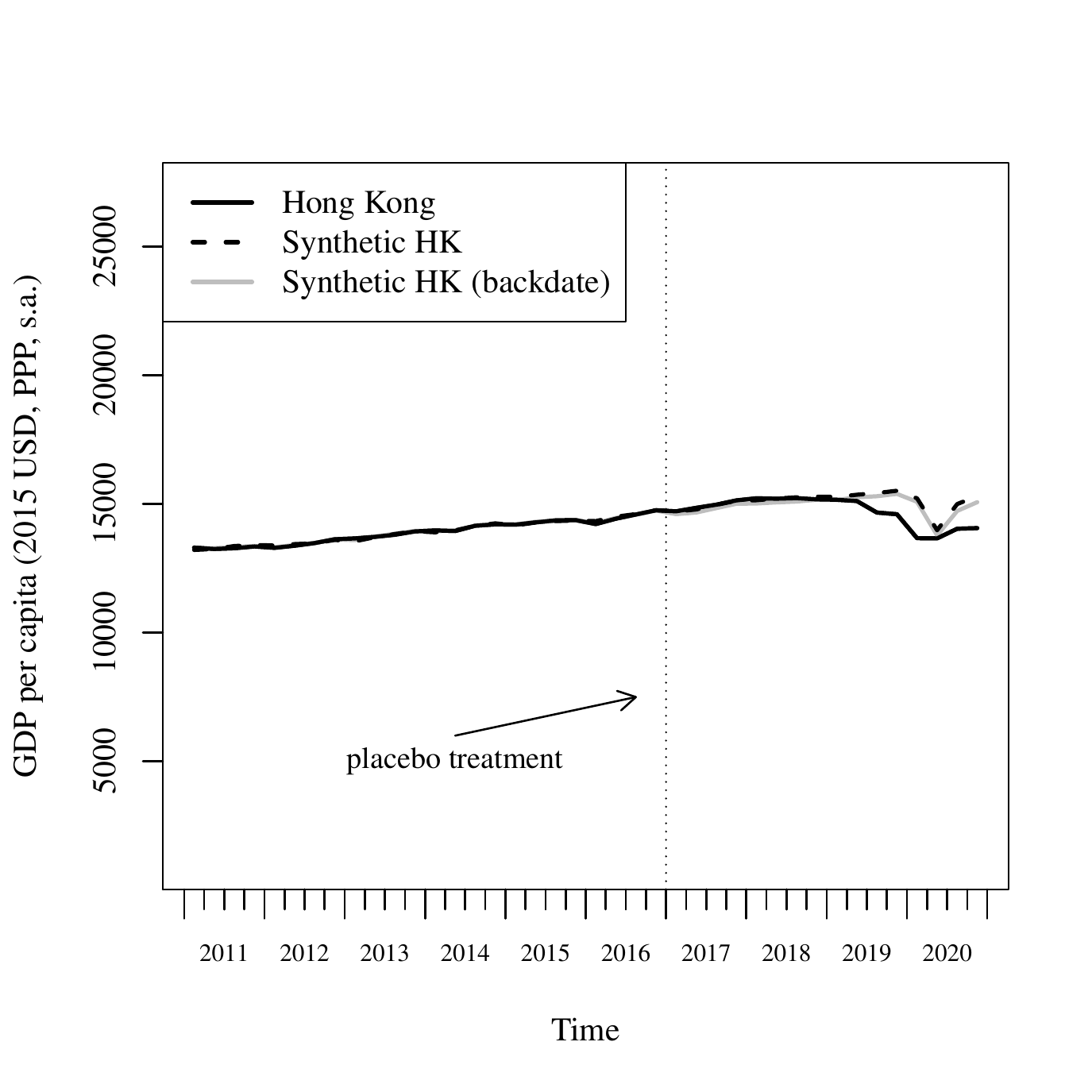}
		\caption{Trajectories for HK and synthetic HK}
		\label{Ch1_fig_SC_BD}
	\end{subfigure}
	~
	\begin{subfigure}{.48\textwidth}
		\centering
		\includegraphics[width=\linewidth]{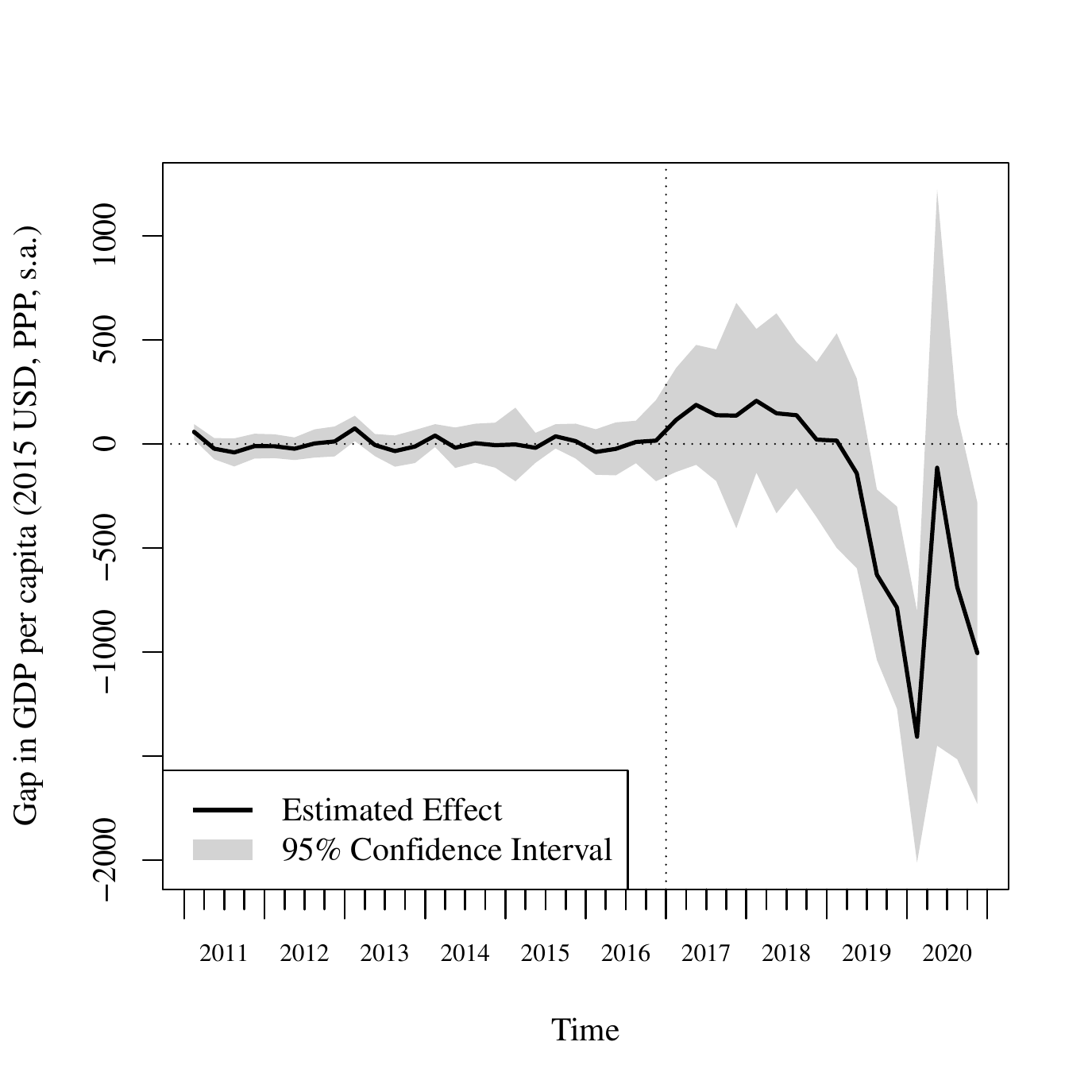}
		\caption{Estimated effect of protests}
		\label{Ch1_fig_TE_BD}
	\end{subfigure}
	\caption{Backdating the Treatment to 2017}
	\label{Ch1_fig_BD}
\end{figure}

The results of the backdating exercise are presented in Figure \ref{Ch1_fig_BD}, which turn out to be very similar to the previous results. We find no significant placebo treatment effect as the trajectory of the newly constructed synthetic Hong Kong (gray solid line) follows that of Hong Kong closely not only before the placebo treatment in the first quarter of 2017, but also all the way through 2017 and 2018, and begins to diverge immediately after the real treatment took place in early 2019.
The magnitude of the estimated treatment effect is close to the previous result, and the 95\% confidence intervals in the backdating exercise similarly suggest that the treatment effect becomes significant from the third quarter of 2019 onwards, except in the second and third quarter of 2020.
The fact the the synthetic Hong Kong is able to reproduce the GDP per capita of Hong Kong in the absence of the real treatment shows the credibility of the synthetic control estimator. And the emergence of the estimated effect shortly after the real treatment provides confidence that the results are driven by a true detrimental impact of the protests.

\begin{figure}[!htb]
	\centering
	\begin{subfigure}{.48\textwidth}
		\centering
		\includegraphics[width=\linewidth]{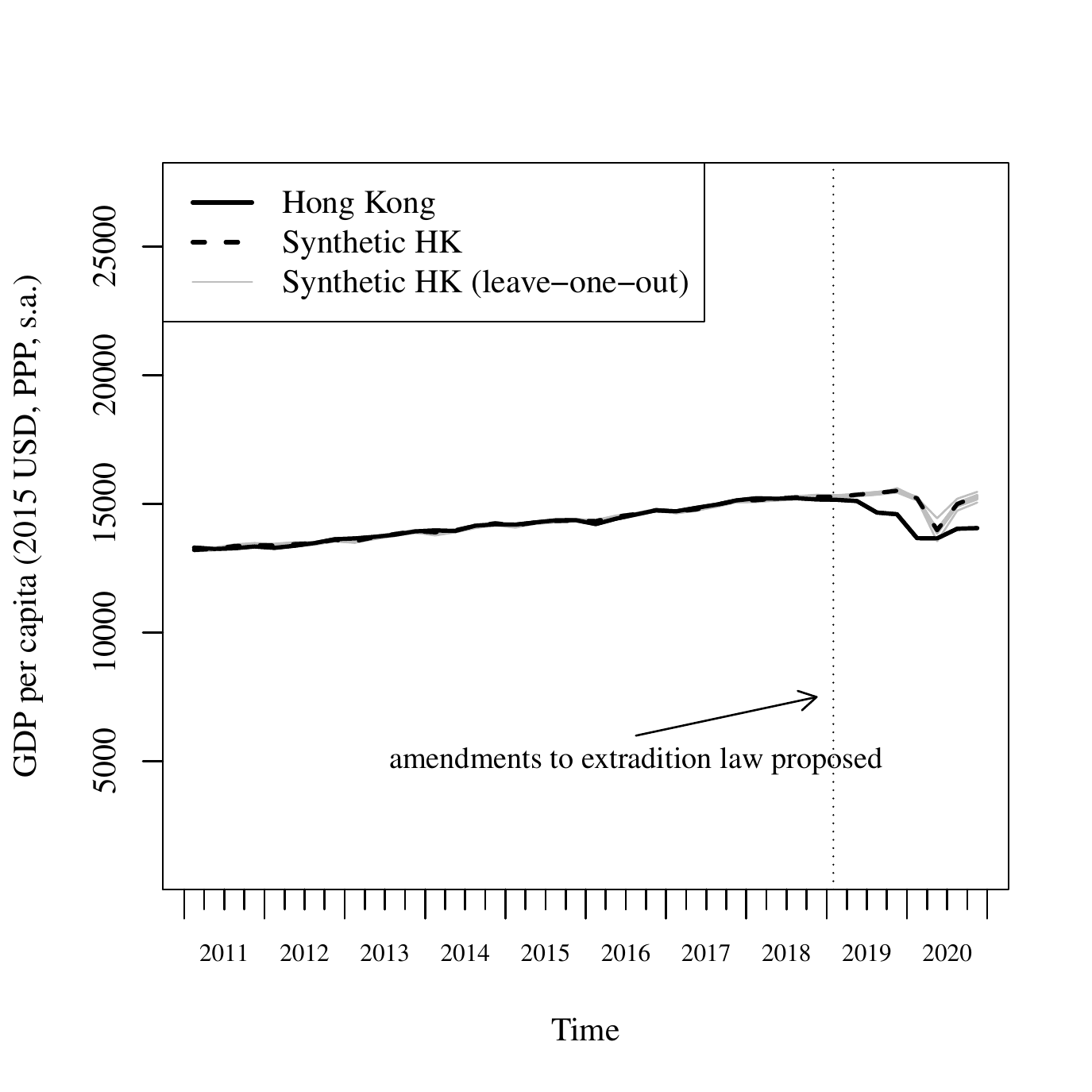}
		\caption{Leave-one-out Distribution}
		\label{Ch1_fig_HK_SC_LOO}
	\end{subfigure}
	~
	\begin{subfigure}{.48\textwidth}
		\centering
		\includegraphics[width=\linewidth]{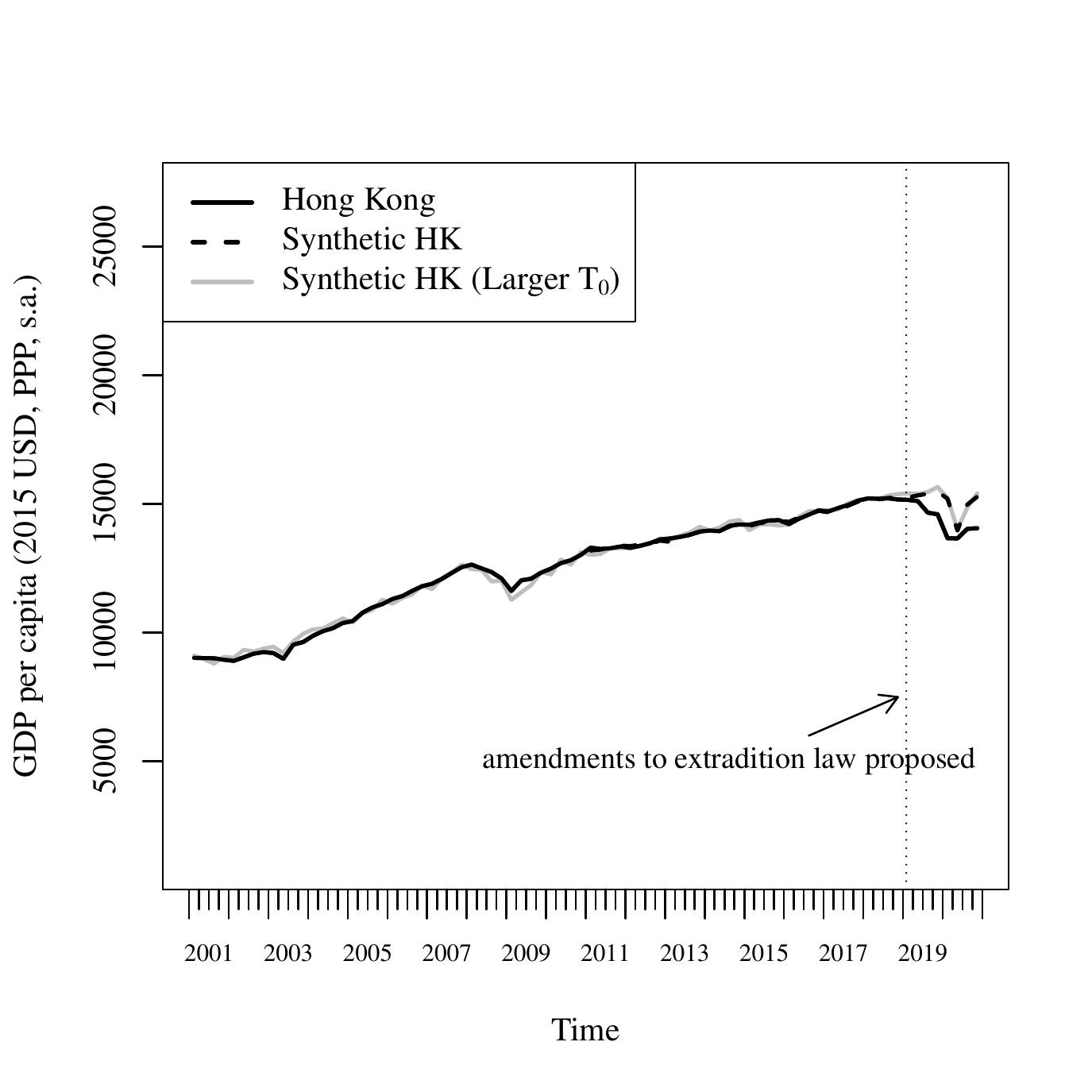}
		\caption{Longer Pretreatment Period}
		\label{Ch1_fig_HK_SC_LP}
	\end{subfigure}
	\caption{Additional Robustness Checks}
	\label{Ch1_fig_checks}
\end{figure}

Apart from the backdate exercise, we conduct two additional robustness checks.
In Figure \ref{Ch1_fig_HK_SC_LOO}, we conduct the leave-one-out exercise, where we exclude one economy at a time from the construction of the synthetic Hong Kong. We see that the trajectories for the synthetic Hong Kong constructed in the leave-one-out iterations are all very similar to the previous results. This shows that the estimated effect of the protests is robust to the exclusion of any particular economy.

We include observations from 2011 in our main analysis to avoid potential structural breaks over a longer timespan. This provides us with 36 pretreatment periods, which should be sufficient to produce credible results. Nevertheless, we check whether our results are robust to the inclusion of more pretreatment periods. In Figure \ref{Ch1_fig_HK_SC_LP}, we double the total time periods by further including outcomes observed from 2001 to 2010.\footnote{This excludes Mainland China from the analysis due to missing data.} The trajectory for the synthetic Hong Kong constructed by matching on outcomes from 2001 to 2018 follows closely the trajectory for the real Hong Kong, and the results are very similar with the benchmark results after the treatment. Thus our results are not sensitive to the inclusion of more pretreatment periods.

\section{Conclusion}\label{Ch1_sec_con}

In this paper, we generalise the synthetic control method to the case where the outcome is a nonlinear function of the underlying predictors. Specifically, we provide conditions for the asymptotic unbiasedness of the synthetic control estimator to complement the theoretical result for the linear case in \cite{abadie2010synthetic}, and propose a flexible and data-driven method for choosing the synthetic control weights.
Monte Carlo simulations show that the nonlinear synthetic control method has similar or better performance in the linear case and better performance in the nonlinear case compared with competing methods, and that the confidence intervals have good coverage probabilities across settings.
In the empirical application, we illustrate the method by estimating the impact of the 2019 anti-extradition law amendments bill protests on Hong Kong's economy, and find that the year-long protests reduced the real GDP per capita by 11.27\% in the first quarter of 2020, which is larger in magnitude than the economic decline in the 1997 Asian financial crisis and the 2008 global financial crisis.

\begin{appendices}

	\section{Data Sources}\label{Ch1_sec_data}
	\begin{itemize}
		\item Quarterly GDP for Hong Kong. Sources: Seasonally adjusted GDP in real terms and implicit price deflator (IPD) of GDP, Table E200-6, \url{https://data.gov.hk/en-data/dataset/hk-censtatd-tablechart-gdp}.
		\item Quarterly GDP for Taiwan. Sources: GDP by Expenditures - Seasonally Adjusted Series, Implicit Price Deflators, Principal Figures, \url{https://eng.stat.gov.tw/ct.asp?xItem=37408&CtNode=5347&mp=5}.
		      Purchasing Power Parity/Exchange Rate, \url{https://fred.stlouisfed.org/series/PLGDPOTWA670NRUG}.
		\item Quarterly GDP for Singapore. Sources: Gross Domestic Product In Chained (2015) Dollars, By Industry, Quarterly, Seasonally Adjusted, \url{https://www.tablebuilder.singstat.gov.sg/publicfacing/createDataTable.action?refId=16062}.
		\item Quarterly GDP for the other economies. Sources: GDP expenditure approach, Quarterly National Accounts, OECD Stat, \url{https://stats.oecd.org/}.
		\item Purchasing power parities. Sources: \url{https://data.oecd.org/conversion/purchasing-power-parities-ppp.htm}.
		\item Population. Sources: Total Population - Both Sexes, Population Dynamics, Department of Economic and Social Affairs, United Nations, \url{https://population.un.org/wpp/Download/Standard/Population/}.
	\end{itemize}

	\section{Proofs}\label{Ch1_sec_proof}

	\begin{proof}[Proof of Theorem \ref{Ch1_thm_SC}]

		The proof follows closely the proof in Appendix B of \cite{abadie2010synthetic}. We thus omit many details. For more details, see \cite{abadie2010synthetic} or \cite{botosaru2019role}.

		Under the assumptions, we have
		\begin{align*}
			e_{1t}  \equiv & Y_{1t}^{0}-\sum_j w_j^*Y_{jt}                                                                                                    \\
			        = & \boldsymbol{\lambda}_{t}'\left({\boldsymbol{\lambda}^{T_0}}'\boldsymbol{\lambda}^{T_0}\right)^{-1}{\boldsymbol{\lambda}^{T_0}}'\sum_j w_j^*\varepsilon_{j}^{T_0}                                                                \\
			        & -\boldsymbol{\lambda}_{t}'\left({\boldsymbol{\lambda}^{T_0}}'\boldsymbol{\lambda}^{T_0}\right)^{-1}{\boldsymbol{\lambda}^{T_0}}'\varepsilon_{1}^{T_0}+\varepsilon_{1t}-\sum_j w_j^*\varepsilon_{jt}. \numberthis\label{Ch1_eq_bias}
		\end{align*}

		The terms on the last line has zero conditional mean given Assumption \ref{Ch1_assume_error}, however, the term on the penultimate line does not have zero mean because $w_j^*$ is correlated with $\varepsilon_{j}^{T_0}$.

		Denote the first term as $R_{1t}$. Suppose that the elements of $|\boldsymbol{\lambda}_t|$ are bounded from above by $\bar{\boldsymbol{\lambda}}$ for $t=1,\dots,T$.
		Under Assumption \ref{Ch1_assume_rank} and using the Cauchy–Schwarz Inequality, we have
		\begin{align*}
			    & \left(\boldsymbol{\lambda}_{t}'\left(\sum_{n=1}^{T_0}\boldsymbol{\lambda}_n\boldsymbol{\lambda}_n'\right)^{-1}\boldsymbol{\lambda}_s\right) 
			\le \left(\frac{\bar{\boldsymbol{\lambda}}^2f}{T_0\underline{\xi}}\right).
		\end{align*}



		Denote $\bar{w}=\text{max}_j|w_j^*|$ ($\bar{w}\le 1$ given the adding-up and non-negativity assumptions), and $\bar{\varepsilon}_j=\sum_{s=1}^{T_0}\boldsymbol{\lambda}_{t}'\left(\sum_{n=1}^{T_0}\boldsymbol{\lambda}_n\boldsymbol{\lambda}_n'\right)^{-1}\boldsymbol{\lambda}_s\varepsilon_{js}$.
		Then using H\"{o}lder's Inequality, we have
		$$|R_{1t}| \le \bar{w}\sum_j |\bar{\varepsilon}_j| \le \bar{w}J^{1-\frac{1}{p}}\left(\sum_j |\bar{\varepsilon}_j|^p\right)^{1/p}$$
		for some positive integer $p$.


		Using H\"{o}lder's Inequality again and using Rosenthal’s Inequality, we have
		$$\mathbb{E}|\bar{\varepsilon}_j|^p \le C(p)\left(\frac{\bar{\boldsymbol{\lambda}}^2f}{T_0\underline{\xi}}\right)^p\text{max}\left\{\sum_{s=1}^{T_0}\mathbb{E}|\varepsilon_{js}|^p,\left(\sum_{s=1}^{T_0}\mathbb{E}|\varepsilon_{js}|^2\right)^{p/2}\right\},$$
		where the constant $C(p)=\mathbb{E}(\theta-1)^p$ with $\theta$ being a Poisson random variable with parameter 1.

		Denote $\bar{m}_p(T_0)=\text{max}_j (1/T_0)\sum_{s=1}^{T_0}\mathbb{E}|\varepsilon_{js}|^p$, then we have
		\begin{equation}
			\mathbb{E}|R_{1t}|< \bar{w}JC(p)^{1/p}\left(\frac{\bar{\boldsymbol{\lambda}}^2f}{\underline{\xi}}\right)\text{max}\left\{\frac{\bar{m}_p(T_0)^{1/p}}{T_0^{1-1/p}},\frac{\bar{m}_2(T_0)^{1/2}}{{T_0}^{1/2}}\right\}.\label{Ch1_eq_bound}
		\end{equation}

		Thus, the bias is bounded by a value that goes to zero when the number of pretreatment periods goes to infinity. Since $\tau_{it}=Y_{it}-Y_{it}^{0}$, this implies that $$\mathbb{E}\left(\hat{\tau}_{1t}^{SC}-\tau_{1t}\right)\rightarrow 0 \enspace \text{as} \enspace T_0\rightarrow \infty.$$

	\end{proof}

	\begin{proof}[Proof of Theorem \ref{Ch1_thm_NSC}]

		Under Assumption \ref{Ch1_assume_nY} that $G(\cdot)$ is a smooth function with $G^{(n)}<\infty$ for all $n$, we can expand $G\left(\boldsymbol{X}_j'\boldsymbol{\beta}_{t}+\boldsymbol{\mu}_j'\boldsymbol{\lambda}_{t}\right)$ for $j\in\mathcal{J}_M$ at $\boldsymbol{X}_1'\boldsymbol{\beta}_{t}+\boldsymbol{\mu}_1'\boldsymbol{\lambda}_{t}$ using Taylor's rule:
		\begin{align*}
			  & G\left(\boldsymbol{X}_j'\boldsymbol{\beta}_{t}+\boldsymbol{\mu}_j'\boldsymbol{\lambda}_{t}\right)                                                                                                                                                   \\
			= & \sum_{n=0}^\infty\frac{\partial^n G\left(\boldsymbol{X}_1'\boldsymbol{\beta}_{t}+\boldsymbol{\mu}_1'\boldsymbol{\lambda}_{t}\right)}{n!\partial \left(\boldsymbol{X}_1'\boldsymbol{\beta}_{t}+\boldsymbol{\mu}_1'\boldsymbol{\lambda}_{t}\right)^n} \\
			= & G\left(\boldsymbol{X}_1'\boldsymbol{\beta}_{t}+\boldsymbol{\mu}_1'\boldsymbol{\lambda}_{t}\right)+\sum_{n=1}^{\infty}\frac{G^{\left(n\right)}}{n!}e_{jt}^n,
		\end{align*}
		where $e_{jt}=\left(\boldsymbol{X}_j-\boldsymbol{X}_1\right)'\boldsymbol{\beta}_{t}+\left(\boldsymbol{\mu}_j-\boldsymbol{\mu}_1\right)'\boldsymbol{\lambda}_{t}$.

		Since $\sum_{j\in\mathcal{J}_M} w_j^*=1$, taking the expectation of $Y_{1t}^{0}-\sum_j w_j^*Y_{jt}$ w.r.t. $\varepsilon_{1t}$ and $\varepsilon_{jt}$, $j\in\mathcal{J}_M$, we have
		\begin{align}\label{Ch1_eq_e}
			  & \mathbb{E}_{\varepsilon}\left(Y_{1t}^0-\sum_j w_j^*Y_{jt}\right)                                                                                                                                                \\
			= & \left[\left(\sum_j w_j^*\boldsymbol{X}_j-\boldsymbol{X}_1\right)'\boldsymbol{\beta}_{t}+\left(\sum_j w_j^*\boldsymbol{\mu}_j-\boldsymbol{\mu}_1\right)'\boldsymbol{\lambda}_{t}\right]G'\nonumber \\
			  & -\sum_{n=2}^{\infty}\frac{G^{\left(n\right)}}{n!}\left(\sum_j w_j^*e_{jt}^n\right).\nonumber
		\end{align}

		If the outcomes are linear functions of the predictors, then the bias goes to zero when $T_0\rightarrow\infty$ according to Theorem \ref{Ch1_thm_SC}.
		If the outcomes are nonlinear functions, then the first term on the RHS of equation \eqref{Ch1_eq_e} may not go to zero since matching on the observed predictors and the pretreatment outcomes does not guarantee matching on the unobserved predictors, and the second term will not vanish since $\boldsymbol{H}_j-\boldsymbol{H}_1$ will not go to zero when $T_0$ and $J$ increase at the same rate.

		We now develop conditions under which the bias given by \eqref{Ch1_eq_e} converges to zero, using results from \cite{abadie2006large}.

		Equation \eqref{Ch1_eq_e} can be written as
		$\mathbb{E}_{\varepsilon}\left(Y_{1t}^0-\sum_j w_j^*Y_{jt}\right)=-\sum_{n=1}^{\infty}\frac{G^{(n)}}{n!}\left(\sum w_j^*e_{jt}^n\right).$ This bias goes to zero if $e_{jt}$ goes to zero.

		Denote $\boldsymbol{U}_j=\boldsymbol{Z}_1-\boldsymbol{Z}_j$, $j\in \mathcal{J}_M$. Without loss of generality, let $M=1+k+T_0$. Lemma 1 in \cite{abadie2006large} shows that $\boldsymbol{U}_j=O_p\left(J^{-\frac{1}{1+k+T_0}}\right)$.
		Thus for fixed $T_0$, $\boldsymbol{Z}_j-\boldsymbol{Z}_1\overset{p}{\rightarrow}0$ when $J\rightarrow\infty$.

		Since $F\left(\cdot\right)$ is a strictly monotonic function, $\boldsymbol{Z}_j-\boldsymbol{Z}_1\overset{p}{\rightarrow}0$ implies that $\boldsymbol{\lambda}^{T_0}\left(\boldsymbol{\mu}_j-\boldsymbol{\mu}_1\right)+\varepsilon_{j}^{T_0}-\varepsilon_{1}^{T_0}\overset{p}{\rightarrow}0$.
		With Assumption \ref{Ch1_assume_rank}, we have $\boldsymbol{\mu}_j-\boldsymbol{\mu}_1+\boldsymbol{\lambda}_{t}'\left({\boldsymbol{\lambda}^{T_0}}'\boldsymbol{\lambda}^{T_0}\right)^{-1}{\boldsymbol{\lambda}^{T_0}}'\left(\varepsilon_{j}^{T_0}-\varepsilon_{1}^{T_0}\right)\overset{p}{\rightarrow}0$, which implies that $\boldsymbol{\mu}_j-\boldsymbol{\mu}_1\overset{p}{\rightarrow}0$ as $T_0\rightarrow\infty$.
		However, $\boldsymbol{U}_j=O_p\left(J^{-\frac{1}{1+k+T_0}}\right)$ may not hold when $T_0\rightarrow\infty$.

		Suppose $T_0^{b(T_0)}/J=O(1)$ for some $b(T_0)\ge 1$.
		For $\boldsymbol{U}_j$ to converge to 0 when $T_0$ goes to infinity, we need $\lim_{T_0\rightarrow \infty}J^{\frac{1}{1+k+T_0}}\rightarrow \infty$, i.e., $\lim_{T_0\rightarrow \infty}T_0^{\frac{b(T_0)}{1+k+T_0}}=\lim_{T_0\rightarrow \infty}e^{\frac{b(T_0)}{1+k+T_0}\ln T_0}=\lim_{T_0\rightarrow \infty}e^{b'(T_0)\ln T_0+\frac{b(T_0)}{T_0}}\rightarrow \infty$ using L'Hôpital's rule.
		This requires $b'(T_0)>0$ since $b(T_0)\ge 1$.

		Given the regularity assumptions, $\mathbb{E}_{\varepsilon}\left(Y_{1t}^0-\sum_j w_j^*Y_{jt}\right)$ is uniformly integrable. Thus $\mathbb{E}\left(\tilde{\tau}_{1t}-\tau_{1t}\right)=\mathbb{E}\left[\mathbb{E}_{\varepsilon}\left(Y_{1t}^0-\sum_j w_j^*Y_{jt}\right)\right]\rightarrow 0$ when $T_0\rightarrow \infty$ if
		$J=O\left(T_0^{b(T_0)}\right)$ with $b(T_0)\ge 1$ and $b'(T_0)>0$.

	\end{proof}

\end{appendices}

 \bibliography{References1}
 \bibliographystyle{apalike}

 \end{document}